\newcommand{\cC}{\ensuremath{\mathcal{C}}}
\newcommand{\cE}{\ensuremath{\mathcal{E}}}
\newcommand{\cI}{\ensuremath{\mathcal{I}}}
\newcommand{\cK}{\ensuremath{\mathcal{K}}}
\newcommand{\cL}{\ensuremath{\mathcal{L}}}
\newcommand{\cM}{\ensuremath{\mathcal{M}}}
\newcommand{\cN}{\ensuremath{\mathcal{N}}}
\newcommand{\cO}{\ensuremath{\mathcal{O}}}
\newcommand{\cP}{\ensuremath{\mathcal{P}}}
\newcommand{\cY}{\ensuremath{\mathcal{Y}}}
\newcommand{\cX}{\ensuremath{\mathcal{X}}}
\newcommand{\bbB}{\ensuremath{\mathbb{B}}}
\newcommand{\bbG}{\ensuremath{\mathbb{G}}}
\newcommand{\bbP}{\ensuremath{\mathbb{P}}}
\newcommand{\bbQ}{\ensuremath{\mathbb{Q}}}
\newcommand{\bbS}{\ensuremath{\mathbb{S}}}
\DeclareMathOperator{\Bd}{\partial}
\DeclareMathOperator{\Faces}{Faces}
\DeclareMathOperator{\codim}{codim}
\DeclareMathOperator{\inte}{Int}
\DeclareMathOperator{\names}{names}
\DeclareMathOperator{\skel}{skel}
\DeclareMathOperator{\views}{views}
\DeclareMathOperator{\view}{view}
\newcommand{\var}[1]{\lstinline+#1+}
\theoremstyle{remark}
\newtheorem{theorem}{Theorem}[section]
\newtheorem{corollary}[theorem]{Corollary}
\newtheorem{definition}[theorem]{Definition}
\newtheorem{lemma}[theorem]{Lemma}
\newcommand{\set}[1]{\{#1\}} 
\newcommand{\send}[1]{\mathbf{send}({#1})}
\newcommand{\recv}[1]{\mathbf{recv}({#1})}
\newcommand{\runmap}[1]{{\, \underrightarrow{\quad {#1} \quad} \,}}
\newcommand{\piece}[2]{[{#1}:{#2}]}
\newcommand{\quorum}[1]{{\textrm{Quorum}({#1})}}
\newcommand{\setproc}[1]{\textrm{SetProc}({#1})}
\newcommand{\setcons}[1]{\textrm{SetCons}({#1})}
\DeclareMathOperator{\cont}{Cont}
\DeclareMathOperator{\interp}{Interp}
\title{Tight Bounds for Connectivity and Set Agreement\\in Byzantine Synchronous Systems\\
~\\
{\large Regular Submission}}
\author{Hammurabi Mendes (Davidson College), Maurice Herlihy (Brown University)}
\date{}
\begin{document}

\maketitle

\begin{abstract}
In this paper,
we show that the protocol complex of a Byzantine synchronous system
can remain $(k - 1)$-connected for up to $\lceil t/k \rceil$ rounds,
where $t$ is the maximum number of Byzantine processes,
and $t \ge k \ge 1$.
This topological property implies that
$\lceil t/k \rceil + 1$ rounds are necessary to solve $k$-set agreement
in Byzantine synchronous systems,
compared to  $\lfloor t/k \rfloor + 1$ rounds in synchronous crash-failure systems.
We also show that our connectivity bound is tight
as we indicate solutions to Byzantine $k$-set agreement in exactly
$\lceil t/k \rceil + 1$ synchronous rounds,
at least when $n$ is suitably large compared to $t$.
In conclusion,
we see how Byzantine failures can potentially require \emph{one} extra round
to solve $k$-set agreement, and,
for $n$ suitably large compared to $t$, \emph{at most that}.
\end{abstract}

\section{Introduction}
\label{Sec-Introduction}

A \emph{task} is a distributed coordination problem
where multiple processes start with private inputs,
communicate among themselves (by shared memory or message passing),
and halt with outputs consistent with the task specification.
There are \emph{crash-failure} systems~\cite{DSBook},
where processes can fail only by permanent, unannounced halting,
or \emph{Byzantine-failure} systems~\cite{Lamport1982},
where processes can fail arbitrarily, even maliciously.
In \emph{synchronous} systems,
communication and computation are organized in discrete rounds.
In each round, each non-faulty process performs as follows, in order: (i) sends a message;
(ii) receives all messages sent in the current round by the other processes;
and (iii) performs internal computation.
In \emph{asynchronous} systems,
processes may have different relative speeds,
and communication is subject to unbound, finite delays.

The problem of consensus in the synchronous Byzantine message-passing
model was among the earliest to be investigated,
and upper and lower consensus bounds in that model are well-understood.
In this paper,
we turn our attention to overall computational power of this model,
including bounds for problems such as $k$-set agreement.
We use concepts and techniques adapted from combinatorial topology.
In essence,
we can capture all possible information dissemination patterns
permitted by this model in a single combinatorial structure called a
\emph{simplicial complex} (or just \emph{complex}).
A classical topological property of a simplicial complex is its level
of \emph{connectivity}, which is, roughly speaking, the dimension below
which it has no holes.
Many classical proofs of consensus impossibility can be reformulated
as showing that certain complexes are 0-connected
(also called \emph{path-connected}),
and all known impossibility proofs for $k$-set agreement rely on
showing that certain complexes are $(k-1)$-connected.
Very informally, the higher the degree of connectivity imposed by the adversary,
the weaker the model's computational power.
Here, we present the first tight bounds on connectivity for the
synchronous Byzantine message-passing model.

Prior work using topological techniques is discussed in Sec.~\ref{Sec-RelatedWork}.
Our operational setting is detailed in Sec.~\ref{Sec-OperationalModel},
and our topological model is formalized in Sec.~\ref{Sec-TopologicalModel}.

Our \textbf{first contribution} comes in Sec.~\ref{Sec-ConnectivityUpperBound}.
We show that,
in a Byzantine synchronous system,
the protocol complex can remain \emph{$(k - 1)$-connected} for $\lceil t/k \rceil$ rounds,
where $t$ is an upper bound on the number of Byzantine processes.
Perhaps surprisingly,
this is only \emph{one} more round than the upper bound for crash-failure systems
($\lfloor t/k \rfloor$, shown in~\cite{ChaudhuriHLT2000}).
Technically,
we conceive a combinatorial operator modeling the ability of Byzantine processes
to \emph{equivocate} -- that is,
to transmit ambiguous state information --
without revealing their Byzantine nature.
We compose this operator with regular crash-failure operators,
extending the protocol complex connectivity for one extra round.
As noted, connectivity is of interest because a $(k - 1)$-connected
protocol complex prevents important problems such as $k$-set
agreement~\cite{ksetagreement,dePriscoMR2001} from having solutions.

Our \textbf{second contribution} comes in Sec.~\ref{Sec-KSetLowerBound}.
We show that the above connectivity bound
is \emph{tight} in certain settings (described in Sec.~\ref{Sec-KSetLowerBound}),
by solving $k$-set agreement in $\lceil t/k \rceil + 1$ rounds.
We do so with a full-information protocol that assumes $n$ suitably large compared to $t$.
The protocol suits well our purpose of tightening the $\lceil t/k \rceil$ bound,
and also exposes clearly \emph{the reason why} $\lceil t/k \rceil + 1$ rounds is enough
to solve $k$-set agreement.

These results give new insight into the power of Byzantine
adversaries for problems beyond consensus.
Although Byzantine adversaries seem much more powerful than
crash-failure ones,
we show that a Byzantine adversary can impose at most \emph{one} additional
synchronous round beyond that imposed by a crash-failure adversary.
In terms of solvability vs. number of rounds,
the penalty for moving from crash to Byzantine failures,
captured by $(k - 1)$-connectivity in the protocol complex,
can be \emph{quite limited} in synchronous systems,
particularly when $n$ is relatively large compared to $t$.

\section{Related Work}
\label{Sec-RelatedWork}

The Byzantine failure model was initially introduced by Lamport,
Shostak, and Pease~\cite{Lamport1982}.
The use of simplicial complexes to model distributed computations
was introduced by Herlihy and Shavit~\cite{HerlihyShavit1993}.
The asynchronous computability theorem for general tasks in~\cite{HerlihyShavit1999}
details the approach for asynchronous wait-free computation in the
crash-failure model.
This model was recently generalized by~Gafni, Kuznetsov, and Manolescu~\cite{GACT2014}.
Computability in Byzantine asynchronous systems,
where tasks are constrained in terms of non-faulty inputs,
was recently considered in~\cite{MendesHerlihy14}.

The $k$-set agreement problem was originally defined by Chaudhuri~\cite{ksetagreement}.
Alternative formulations with
different validity notions, or failure/communication settings,
are discussed in~\cite{Neiger93,dePriscoMR2001}.
A full characterization of optimal translations between different failure settings
is given in~\cite{BazziNeiger01,NeigerToueg90},
which requires different number of rounds depending on the relation between
the number of faulty processes,
and the number of participating processes.

The relationship between connectivity and the impossibility of $k$-set agreement
is described explicitly or implicitly in~\cite{ChaudhuriHLT2000,HerlihyShavit1999,SaksZ1993}.
Recent work by Castañeda, Gonczarowski, and Moses~\cite{CastaMosesBA2013}
considers an issue of chains of hidden values,
a concept loosely explored here.
The approach based on shellability and layered executions for lower bounds in connectivity
has been used by Herlihy, Rajsbaum, and Tuttle~\cite{HerlihyRT98,HerlihyRT09,ConcurrentShellable},
assuming crash-failure systems, synchronous or asynchronous.

\section{Operational Model}
\label{Sec-OperationalModel}

We have $n+1$ processes\footnote{
Choosing $n+1$ processes rather than $n$ simplifies the topological notation,
but slightly complicates the computing notation.
Choosing $n$ processes has the opposite trade-off.
We choose $n+1$ for compatibility with prior work.
}
$\bbP = \set{P_0, \ldots, P_n}$
communicating by message-passing via pairwise, reliable, FIFO channels
(\emph{authenticated channels} in the literature~\cite{RSPBook}).
Technically,
all transmitted messages are delivered uniquely, in FIFO order,
and with sender reliably identified.

At most $t$ processes are \emph{faulty} or \emph{Byzantine}~\cite{Lamport1982},
and may display arbitrary, even malicious behavior, at any point in the execution.
The actual behavior of Byzantine processes
is defined by an \emph{adversary}.
Byzantine processes may execute the protocol correctly or incorrectly,
at the discretion of the adversary.
Processes behaving in strict accordance to the protocol for rounds $1$ up to some $r$ (inclusive)
are called \emph{non-faulty processes up to round $r$},
and are denoted by $\bbG^r$.
A non-faulty process up to any round $r \ge 1$ is called simply \emph{non-faulty} or \emph{correct},
which we denote by $\bbG$.

We model processes as state machines.
The input value (resp. output value) of a non-faulty process $P_i$ is written $I_i$ (resp. $O_i$).
Byzantine processes may have ``apparent'' inputs,
denoted as above.
Each non-faulty process $P_i$ has an internal state called \emph{view}, which we denote by $\view(P_i)$.
In the beginning of the protocol, $\view(P_i)$ is $I_i$.
At any round $r$,
any non-faulty process:
(1) sends its internal state to all other processes;
(2) receives the state information from other processes;
(3) concatenates that information to its own internal state.
After completing some number of iterations,
each process applies a decision function $\delta$ to its current state in order to decide $O_i$.
Thus,
we assume that processes follow a \emph{full-information} protocol~\cite{HerlihyRT09}.

For simplicity of notation,
we define a round $0$ where processes are simply assigned their inputs.
Without losing generality,
all processes are assumed non-faulty up to round $0$:
$\bbG^0 = \bbP$ and $\bbB^0 = \emptyset$.
For any round $r \ge 0$,
a \emph{global state} formally specifies:
(1) the non-faulty processes up to round $r$;
and (2) the view of all non-faulty processes up to round $r$.

\section{Topological Model}
\label{Sec-TopologicalModel}

We now sketch the required concepts from combinatorial topology.
For details, please refer to Munkres~\cite{Munkres84}, Kozlov \cite{Kozlov07}, or Herlihy \emph{et al.}~\cite{MauriceBook}.



\textbf{Basics. }
A \emph{simplicial complex} $\cK$ consists of a finite set $V$
along with a collection of subsets of $V$ closed under containment.
An element of $V$ is called a \emph{vertex} of $\cK$.
The set of vertices of $\cK$ is referred by $V(\cK)$.
Each set in $\cK$ is called a \emph{simplex},
usually denoted by lower-case Greek letters: $\sigma, \tau$, etc.
The \emph{dimension} $\dim(\sigma)$ of a simplex $\sigma$ is $|\sigma|-1$.

A subset of a simplex is called a \emph{face}.
The collection of faces of $\sigma$ with dimension exactly $x$ is called $\Faces^x(\sigma)$.
A face $\tau$ of $\sigma$ is called \emph{proper} if $\dim(\tau) = \dim(\sigma) - 1$.
We use ``$k$-simplex'' as shorthand for ``$k$-dimensional simplex'',
also in ``$k$-face.''
The dimension $\dim(\cK)$ of a complex is the maximal dimension of its simplexes,
and a \emph{facet} of $\cK$ is any simplex having maximal dimension in $\cK$.
A complex is said \emph{pure} if all facets have dimension $\dim(\cK)$.
The set of simplexes of $\cK$
having dimension at most~$\ell$ is a subcomplex of $\cK$,
which is called \emph{$\ell$-skeleton} of $\cK$,
denoted by $\skel^\ell(\cK)$.
 

\textbf{Maps. }
Let $\cK$ and $\cL$ be complexes.
A \emph{vertex map} $f$ carries vertices of $\cK$ to vertices of $\cL$.
If $f$ additionally carries simplexes of $\cK$ to simplexes of $\cL$,
it is called a \emph{simplicial} map.
A~\emph{carrier map} $\Phi$ from $\cK$ to $\cL$ takes each
simplex $\sigma\in\cK$ to a subcomplex $\Phi(\sigma) \subseteq \cL$,
such that for all $\sigma,\tau \in \cK$,
we have $\Phi(\sigma \cap \tau) \subseteq \Phi(\sigma) \cap \Phi(\tau)$.
A simplicial map $\phi: \cK \to \cL$ is \emph{carried by the carrier map}
$\Phi: \cK \to 2^{\cL}$ if, for every simplex $\sigma \in \cK$,
we have $\phi(\sigma) \subseteq \Phi(\sigma)$.

Although we defined simplexes and complexes in a purely combinatorial way,
they can also be interpreted geometrically.
An $n$-simplex can be identified with the convex hull of $(n+1)$
affinely-independent points in the Euclidean space of appropriate dimension.
This geometric realization can be extended to complexes.
The point-set that underlies such \emph{geometric complex} $\cK$ is called
the \emph{polyhedron} of $\cK$, denoted by $|\cK|$.
For any simplex $\sigma$, the \emph{boundary} of $\sigma$, which we denote $\Bd{\sigma}$,
is the simplicial complex of $(\dim(\sigma) - 1)$-faces of $\sigma$.
The \emph{interior} of $\sigma$ is defined as
$\inte{\sigma} = |\sigma| \setminus |\Bd{\sigma}|$.

We can define simplicial/carrier maps between geometrical complexes.
Given a simplicial map $\phi: \cK \to \cL$ (resp. carrier map $\Phi: \cK \to 2^{\cL}$),
the polyhedrons of every simplex in $\cK$ and $\cL$
induce a continuous simplicial map $\phi_c: |\cK| \to |\cL|$ (resp. continuous carrier map $\Phi_c: |\cK| \to |2^{\cL}|$).
We say $\phi$ (resp. $\phi_c$) is carried by $\Phi$
if, for any $\sigma \in \cK$,
we have $|\phi(\sigma)| \subseteq |\Phi(\sigma)|$ (resp. $\phi_c(|\sigma|) \subseteq \Phi_c(|\sigma|)$).


\textbf{Connectivity.} 
In light of topology,
two geometrical objects $A$ and $B$ are \emph{homeomorphic} if,
there is a continuous map from $A$ into $B$ or vice-versa.
Technically,
there exists a continuous map between those objects, in either direction \cite{Munkres00,Munkres84}.
%
%
We say that a simplicial complex $\cK$
is \emph{$x$-connected}, $x \ge 0$,
if every continuous map
of a subset of $|\cK|$ homeomorphic to an $x$-sphere in $|\cK|$
can be extended into
a subset of $|\cK|$ homeomorphic to an $(x+1)$-disk in $|\cK|$.
In analogy,
think of the extremes of a pencil as a $0$-disk,
and the pencil itself as a $1$-sphere (the extension is possible if $0$-connected);
the rim of a coin as a $1$-sphere,
and the coin itself as a $2$-disk (the extension is possible if $1$-connected);
the outer layer of a billiard ball as a $2$-sphere,
and the billiard ball itself as a $3$-disk (the extension is possible if $2$-connected).
For us,
$(-1)$-connected is understood as \emph{non-empty},
and $(-2)$-connected or lower imposes no restriction.


\begin{definition}
\label{definition-pseudosphere}
Let $\bbS = \set{(P_i,S_i): P_i \in \bbP'}$,
where each $S_i$ is an arbitrary set and $\bbP' \subseteq \bbP$.
A \emph{pseudosphere} $\Psi(\bbP', \bbS)$ is a simplicial complex where
$\sigma \in \Psi(\bbP', \bbS)$ if
$\sigma = \set{(P_i,V_i): P_i \in \bbP', V_i \in S_i}$.
\end{definition}

Essentially,
a pseudosphere
is a simplicial complex formed by independently assigning values to all the specified processes.
If $S_i = S$ for all $P_i \in \bbP'$,
we simply write $\Psi(\bbP', S)$.


\begin{definition}
\label{definition-shellable}
A pure, simplicial complex $\cK$ is \emph{shellable} if we can arrange the facets of $\cK$
in a linear order $\phi_0 \ldots, \phi_t$ such that
$ \left( \bigcup_{0 \le i < k} \phi_i \right) \cap \phi_k $
is a pure $(\dim(\phi_k) - 1)$-dimensional simplicial complex for all $0 < k \le t$.
We call the above linear order $\phi_0, \ldots, \phi_t$ a \emph{shelling order}.
\end{definition}

Intuitively,
a simplicial complex is shellable
if it can be built by gluing its $x$-simplexes along their $(x - 1)$ faces only,
where $x$ is the dimension of the complex.
Note that $\phi_0, \ldots, \phi_t$ is a shelling order
if any $\phi_i \cap \phi_j$ ($0\le i < j \le t$)
is contained in a $(\dim(\phi_k) - 1)$-face of $\phi_k$ ($0\le k < j$).
Hence,
\begin{align}
\textrm{for any } i < j \textrm{ exists } k < j \textrm{ where } (\phi_i \cap \phi_j) \subseteq (\phi_k \cap \phi_j)
\textrm{ and }
|\phi_j \setminus \phi_k| = 1 \mathrm{.}
\end{align}
Shellability and pseudospheres are important tools to
characterize connectivity in simplicial complexes.
The following lemmas are proved
in~\cite{ConcurrentShellable} and~\cite{MauriceBook} (pp. 252--253).

\begin{lemma}
\label{lemma-pseudosphere-shellable}
Any pseudosphere $\phi(\bbP', \bbS)$ is shellable,
considering arbitrary $\bbS = \set{(P_i,S_i): \forall P_i \in \bbP'}$.
\end{lemma}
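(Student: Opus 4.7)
The plan is to exhibit an explicit lexicographic shelling of the facets of $\Psi(\bbP', \bbS)$ and verify the equivalent shelling characterization stated immediately before the lemma. First I would fix an arbitrary linear order $P_{i_1} < P_{i_2} < \cdots < P_{i_m}$ on $\bbP'$ and an arbitrary linear order on each value set $S_{i_\ell}$. By Definition~\ref{definition-pseudosphere}, a facet of $\Psi(\bbP', \bbS)$ corresponds bijectively to a tuple $(V_{i_1}, \ldots, V_{i_m}) \in S_{i_1} \times \cdots \times S_{i_m}$, namely the simplex $\{(P_{i_\ell}, V_{i_\ell}) : 1 \le \ell \le m\}$. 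In particular, $\Psi(\bbP', \bbS)$ is pure of dimension $m - 1$, as required. I would then enumerate the facets $\phi_0, \phi_1, \ldots$ in lexicographic order of their associated tuples.

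To verify the shelling condition, I would fix indices $0 \le i < j$ in this enumeration and let $\ell$ be the smallest coordinate at which the tuples for $\phi_i$ and $\phi_j$ disagree. Write $V_{i_\ell}$ for the $\ell$-th value of $\phi_i$ and $V'_{i_\ell}$ for that of $\phi_j$, so that $V_{i_\ell} < V'_{i_\ell}$ in the chosen order on $S_{i_\ell}$. Define $\phi_k$ to be the facet whose tuple agrees with $\phi_j$'s in every coordinate except position $\ell$, where it takes the value $V_{i_\ell}$. Then $\phi_k$ precedes $\phi_j$ lexicographically, since the two tuples agree at positions $1, \ldots, \ell-1$ and $\phi_k$ is strictly smaller at position $\ell$; hence $k < j$. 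By construction $|\phi_j \setminus \phi_k| = 1$, consisting of the single vertex at $P_{i_\ell}$. Finally, $\phi_i$ and $\phi_j$ can agree only at positions where their tuples coincide, which is a subset of $\{1, \ldots, m\} \setminus \{\ell\}$; these are exactly the positions where $\phi_k$ and $\phi_j$ agree, giving $\phi_i \cap \phi_j \subseteq \phi_k \cap \phi_j$. This is the shelling condition.

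I do not expect a substantive obstacle: once the pseudosphere is unpacked as the join of the $0$-dimensional complexes $\{(P_{i_\ell}, V) : V \in S_{i_\ell}\}$, shellability reduces to the combinatorial verification above. The only boundary case worth a brief check is $m = 1$, where $\Psi(\bbP', \bbS)$ is a pure $0$-complex whose facets are individual vertices; any linear ordering is then trivially a shelling, since the required intersections are empty and the empty complex realizes the pure $(-1)$-dimensional convention. Induction on $|\bbP'|$, peeling off one process and concatenating the shellings of the resulting ``slabs'' indexed by values in $S_{i_m}$, provides a clean alternative presentation but yields essentially the same shelling order.
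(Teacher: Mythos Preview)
Your argument is correct: the lexicographic ordering on $\prod_\ell S_{i_\ell}$ is a shelling of $\Psi(\bbP',\bbS)$, and your verification of the criterion displayed just before the lemma (for every $i<j$ produce $k<j$ with $\phi_i\cap\phi_j\subseteq\phi_k\cap\phi_j$ and $|\phi_j\setminus\phi_k|=1$) is clean and complete, including the $m=1$ edge case.

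As for comparison with the paper: the paper does not give its own proof of this lemma at all---it merely cites \cite{ConcurrentShellable} and \cite{MauriceBook} for the result. The argument in those references is precisely the lexicographic shelling you wrote down (equivalently, the observation that a pseudosphere is the simplicial join of the $0$-complexes $\{(P_{i_\ell},V):V\in S_{i_\ell}\}$, and joins of shellable complexes are shellable with the product/lexicographic order on facets). So your proposal is not just correct, it is the standard proof and essentially what the cited sources do; there is nothing different to contrast.
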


\begin{lemma}
\label{lemma-shellable-connected}
For any $k \ge 1$,
if the simplicial complex $\cK$ is shellable and $\dim(\cK) \ge k$
then $\cK$ is $(k - 1)$-connected.
\end{lemma}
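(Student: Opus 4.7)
The plan is to induct on the position $i$ in the shelling order $\phi_0, \ldots, \phi_t$. Define $\cK_i = \phi_0 \cup \cdots \cup \phi_i$, and let $d = \dim(\cK) \ge k$; by the purity built into Definition~\ref{definition-shellable}, every facet $\phi_j$ is a $d$-simplex. The base case $\cK_0 = \phi_0$ is a single simplex, hence contractible, and in particular $(k-1)$-connected for any $k$.

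For the inductive step I would assume $\cK_{i-1}$ is $(k-1)$-connected and analyze $\cK_i = \cK_{i-1} \cup \phi_i$. The shelling property gives that $L := \cK_{i-1} \cap \phi_i$ is pure of dimension $d-1$, and since no earlier facet can contain $\phi_i$ (all facets share the same dimension $d$), the intersection is actually contained in $\Bd \phi_i$, which is a $(d-1)$-sphere. I would then split into two cases depending on whether $L = \Bd \phi_i$ or $L \subsetneq \Bd \phi_i$.

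In the first case, $\phi_i$ is glued to $\cK_{i-1}$ along the whole boundary sphere, so passing from $\cK_{i-1}$ to $\cK_i$ is (topologically) the attachment of a $d$-cell; such an attachment can only alter $\pi_j$ for $j \ge d-1$, and since $d-1 \ge k-1$, the groups $\pi_0,\ldots,\pi_{k-1}$ are undisturbed. In the second case, $L$ is a proper pure $(d-1)$-subcomplex of $\Bd \phi_i$, i.e.\ the union of some strict subset of the $d+1$ codimension-one faces of $\phi_i$ together with their subfaces; such a union is collapsible (missing faces provide free faces to eliminate), hence contractible. Gluing the contractible $\phi_i$ to $\cK_{i-1}$ along the contractible $L$ via the cofibration $L \hookrightarrow \phi_i$ yields $\cK_i \simeq \cK_{i-1}$. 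In either case $\cK_i$ inherits $(k-1)$-connectivity from $\cK_{i-1}$, closing the induction.

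The main obstacle I expect is formalising the two homotopical claims at the level of rigour used in this paper, since the statement is phrased in terms of extending maps $S^j \to |\cK|$ to $D^{j+1} \to |\cK|$ rather than in cellular/CW language. For the first case I would directly attack the extension problem: any $S^j \to \cK_i$ with $j \le k-1$ can be moved off the open $d$-cell $\inte \phi_i$ by a general-position / simplicial-approximation argument (since $j < d$), after which it lies in $\cK_{i-1}$ and extends by hypothesis. For the second case the delicate point is the explicit contraction of $L$: picking one codimension-one face $\tau$ of $\phi_i$ not belonging to $L$, the unique vertex $v \in \phi_i \setminus \tau$ lies in every facet of $L$, so $L$ is a cone with apex $v$ and is therefore contractible; the gluing argument then reduces $\cK_i$ to $\cK_{i-1}$ up to homotopy, completing the proof.
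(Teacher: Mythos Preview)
The paper does not actually prove Lemma~\ref{lemma-shellable-connected}; it is quoted from~\cite{ConcurrentShellable} and~\cite{MauriceBook} (pp.~252--253) and used as a black box. So there is no in-paper proof to compare against, only the cited sources.

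Your argument is correct and is in fact the standard one appearing in those references (and, more classically, in Bj\"orner's survey~\cite{Bjorner1995}): induct along the shelling, and at each step observe that $\cK_{i-1}\cap\phi_i$ is either all of $\Bd\phi_i$ (so a $d$-cell is attached, which cannot destroy $(k-1)$-connectivity since attaching a $d$-cell induces isomorphisms on $\pi_j$ for $j<d-1$ and a surjection on $\pi_{d-1}$, with $k-1\le d-1$), or a proper nonempty union of codimension-one faces of $\phi_i$, which is a cone on the vertex opposite any missing face and hence contractible, making the gluing a homotopy equivalence. Your identification of the cone apex is exactly right: if $\tau=\phi_i\setminus\{v\}$ is a missing $(d-1)$-face, every other $(d-1)$-face of $\phi_i$ contains $v$, so all facets of $L$ do.

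Two small comments on presentation. First, in the cell-attachment case you write ``can only alter $\pi_j$ for $j\ge d-1$''; since the borderline case $k=d$ is allowed by the hypothesis $\dim(\cK)\ge k$, you should make explicit that on $\pi_{d-1}$ the effect is a \emph{surjection}, so triviality is preserved. Second, your general-position alternative for that case is fine but is really a repackaging of the same cell-attachment fact; pick one formulation and commit to it.
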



\textbf{Nerve Theorem.} 
Let $\cK$ be a simplicial complex with a \emph{cover}
$\set{\cK_i: i \in I} = \cK$,
where $I$ is a finite index set.
The \emph{nerve} $\cN(\set{\cK_i: i \in I})$ is the simplicial complex with vertexes $I$
and simplexes $J \subseteq I$ whenever $ \cK_J = \bigcap_{j \in J} \cK_j \ne \emptyset \mathrm{.} $
We can characterize the connectivity of $\cK$ in terms of the connectivity of
the intuitively simpler nerve of $\cK$
with the next theorem.

\begin{theorem}[Nerve Theorem \cite{Kozlov07,Bjorner1995}]
\label{theorem-nerve}
If for any $J \subseteq I$ denoting a simplex of $\cN(\set{\cK_i: i \in I})$
(thus, $\cK_J \ne \emptyset$)
we have that $\cK_J$ is $(k - |J| + 1)$-connected,
then $\cK$ is $k$-connected if and only if $\cN(\set{\cK_i: i \in I})$ is $k$-connected.
\end{theorem}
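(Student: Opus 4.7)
The plan is to prove this by induction on $|I|$, using a Mayer-Vietoris style pushout argument to transfer connectivity between $\cK$ and $\cN$. For the base case $|I| = 1$, the nerve is a single vertex (trivially $k$-connected for every $k$), and the hypothesis applied to $J = \{1\}$ gives that $\cK = \cK_{\{1\}}$ is $k$-connected, so the biconditional holds.

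For the inductive step, fix $i_0 \in I$ and set $I' = I \setminus \{i_0\}$, $\cL = \bigcup_{i \in I'} \cK_i$. Then $\cK = \cL \cup \cK_{i_0}$ with $\cL \cap \cK_{i_0} = \bigcup_{i \in I'} (\cK_i \cap \cK_{i_0})$, while the nerve decomposes as $\cN = \cN' \cup \Star_{\cN}(i_0)$, where $\cN' = \cN(\{\cK_i\}_{i \in I'})$ and $\cN' \cap \Star_{\cN}(i_0) = \Link_{\cN}(i_0)$. The crucial observation is that $\Link_{\cN}(i_0)$ is precisely the nerve of the induced cover $\{\cK_i \cap \cK_{i_0}\}_{i \in I'}$ of $\cL \cap \cK_{i_0}$: a simplex $J \subseteq I'$ of this link corresponds to $J \cup \{i_0\}$ in $\cN$, so the hypothesis gives $\cK_{J \cup \{i_0\}}$ connectivity $k - |J \cup \{i_0\}| + 1 = (k-1) - |J| + 1$, which is exactly what is needed to apply induction at connectivity level $k-1$. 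Meanwhile, the reduced cover $\{\cK_i\}_{i \in I'}$ of $\cL$ inherits the hypothesis verbatim at level $k$.

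Applying induction yields that $\cL$ is $k$-connected iff $\cN'$ is, and $\cL \cap \cK_{i_0}$ is $(k-1)$-connected iff $\Link_{\cN}(i_0)$ is. Since $\cK_{i_0}$ is $k$-connected (hypothesis with $J = \{i_0\}$) and $\Star_{\cN}(i_0)$ is a cone, hence contractible, a gluing lemma derived from Seifert-van Kampen plus the long exact sequence of a pushout in the range $j \le k$ shows $\cK$ is $k$-connected iff $\cN$ is.

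The main obstacle will be establishing the gluing lemma cleanly in \emph{both} directions, since the theorem is a biconditional: the forward implication (the union is $k$-connected when the two pieces are $k$-connected and their intersection is $(k-1)$-connected) is classical, but the backward implication requires that the inclusions of $\cL \cap \cK_{i_0}$ into $\cL$ and into $\cK_{i_0}$ behave as cofibrations, so that the topological union coincides with a homotopy pushout. Here we can exploit that all inclusions are of subcomplexes within a CW structure, which is automatic in the simplicial setting. Additional attention is needed at boundary values $k \in \{-1,-2\}$ where the connectivity hypotheses become vacuous or reduce to nonemptiness, and in verifying that the link-star decomposition of $\cN$ matches, simplex by simplex, the topological decomposition $\cK = \cL \cup \cK_{i_0}$ on the space side.
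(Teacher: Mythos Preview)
The paper does not prove this theorem. It is stated with citations to Kozlov and Bj\"orner and then used as a black box in the proof of Lemma~\ref{lemma-lasttwo:conn}; there is no ``paper's own proof'' to compare your proposal against.

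That said, your inductive scheme has a structural gap worth noting. You reduce the biconditional for $\cK$ versus $\cN$ to the pair of biconditionals $\cL \leftrightarrow \cN'$ and $(\cL \cap \cK_{i_0}) \leftrightarrow \Link_{\cN}(i_0)$, and then want a gluing lemma to finish. But the gluing lemma you invoke runs only in one direction: from connectivity of the pieces and their intersection to connectivity of the union. In the other direction, knowing that $\cK$ is $k$-connected does \emph{not} let you conclude that $\cL$ is $k$-connected or that $\cL \cap \cK_{i_0}$ is $(k-1)$-connected, so you cannot feed those into the inductive biconditionals to deduce that $\cN'$ and $\Link_{\cN}(i_0)$ have the right connectivity, and hence that $\cN$ does. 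The same obstruction blocks the reverse implication. The homotopy-pushout/cofibration remark does not fix this: a homotopy pushout of CW pairs gives you a Mayer--Vietoris long exact sequence, but that sequence still does not let you read off the connectivity of one summand from the connectivity of the union.

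The standard proofs in the cited references avoid this by constructing an explicit comparison map (or zigzag through a homotopy colimit of the diagram $J \mapsto \cK_J$) and showing it is a $k$-equivalence; the biconditional then falls out of a single map rather than being reassembled from two independent inductive biconditionals. If you want to keep your inductive decomposition, you would need to carry a map (not just a connectivity statement) through the induction and show that the square formed by the two pushout decompositions commutes up to homotopy.
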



\textbf{Protocol Complexes.} 
We represent the evolution of the global state of the system throughout the rounds
by simplicial complexes that we call \emph{protocol complexes}.

\begin{definition}
\label{definition-labeled}
For $r \ge 0$,
a \emph{name-view} simplex $\sigma$ is such that:
(i) $\sigma = \set{(P_i,\view^r(P_i)): \forall P_i \in \bbG^r}$,
where $\view^r(P_i)$ denotes $P_i$'s view at round $r$;
and (ii) if $(P_i,\view^r(P_i))$ and $(P_j,\view^r(P_j))$ are both in $\sigma$,
then $P_i \ne P_j$.
\end{definition}

Unless otherwise noted, all of our simplicial and carrier maps $f$ are such that $\names(\sigma) = \names(f(\sigma))$, that is, they map between vertices associated with the same processes.

\begin{definition}
For any name-view simplex $\sigma$, define
$\names(\sigma) = \set{P_i: \exists V \textrm{ such that } (P_i,V) \in \sigma}$
and
$\views(\sigma) = \set{V_i: \exists P \textrm{ such that } (P,V_i) \in \sigma}$.
\end{definition}

The round-$0$ protocol complex $\cK^0$ has
name-view $n$-simplexes $\sigma_I = \set{(P_i,I_i): \forall P_i \in \bbG^0}$,
representing all the possible process inputs in the beginning of the protocol.
The round-$r$ protocol complex $\cK^r$,
for any $r \ge 0$,
is defined as follows:
if $\sigma \in \cK^r$,
then $\sigma = \set{(P_i,\view^r(P_i)): \forall P_i \in \bbG^r}$,
representing a possible global state of the system for round $r$.

\section{Connectivity Upper Bound}
\label{Sec-ConnectivityUpperBound}
Informally, if the adversary displays Byzantine behavior early in the execution,
then in a synchronous, full-information protocol,
subsequent communication among the non-faulty processes can reveal
the identities of the Byzantine processes,
using simple techniques inspired from \cite{BazziNeiger01,Bracha,SriTouRB}.
Instead, it behooves the adversary to postpone malicious behavior to
the very last round, where it cannot detected.

Say that non-faulty processes start the computation with inputs in $V = \set{v_0, \ldots, v_d}$,
\emph{arbitrarily} assigned,
with some $d \ge k$ and $t \ge k \ge 1$.
To prove our upper bound,
we show how the adversary can impose a particular admissible execution
that preserves high connectivity in the protocol complex.

Let $r = \lfloor t/k \rfloor$ and $m = t \bmod k$.
We have $r$ \emph{crash rounds},
where in each round $k$ processes fail by crashing, but display no Byzantine behavior.
If $m > 0$,
we have an extra \emph{equivocation round},
where a single Byzantine process
sends different views to different processes,
causing extra confusion.
This round-by-round execution produces a sequence of protocol complexes
$\cK^0, \ldots, \cK^{r+1}$,
related by carrier maps
$\cC^i: \cK^{i-1} \to 2^{\cK^i}$,
for $1 \le i \le r$,
and $\cE: \cK^r \to 2^{\cK^{r+1}}$.
\begin{equation}
\label{equation-setting}
\cK^0 \runmap{\cC^1} \cK^1 \ldots \runmap{\cC^r} \cK^r
\underbrace{\runmap{\cE} \cK^{r+1}}_{\textrm{only if $m > 0$}} \mathrm{.}
\end{equation}


In each of the first $r$ rounds,
exactly $k$ processes are failed by the adversary.
The crash-failure carrier maps are defined
as follows~\cite{ConcurrentShellable,MauriceBook}:

\begin{definition}
\label{definition-roundop-crash}
For any $1 \le i \le r$,
the crash-failure operator $\cC^i: \cK^{i-1} \to 2^{\cK^i}$ is such that
\begin{equation}
\label{eq-roundop-crash}
\cC^i(\sigma) = \bigcup_{\tau \in \Faces^{n-ik}(\sigma)}
		\Psi(\names(\tau); \piece{\tau}{\sigma})
\end{equation}
for any $\sigma \in \cK^{i-1}$,
with $\piece{\tau}{\sigma}$ denoting the set of simplexes $\mu$ where $\tau \subseteq \mu \subseteq \sigma$.
\end{definition}

\begin{definition}
\label{definition-mapconnected}
A \emph{$q$-connected} carrier map $\Phi: \cK \to 2^{\cL}$ is
a strict carrier map such that, for all $\sigma \in \cK$,
$\dim(\Phi(\sigma)) > q - \codim_{\cK}(\sigma)$
and
$\Phi(\sigma)$ is $(q - \codim_{\cK}(\sigma))$-connected.
\end{definition}

\begin{definition}
\label{definition-mapshellable}
A \emph{$q$-shellable} carrier map $\Phi: \cK \to 2^{\cL}$ is
a strict carrier map such that, for all $\sigma \in \cK$,
$\dim(\Phi(\sigma)) > q - \codim_{\cK}(\sigma)$
and
$\Phi(\sigma)$ is shellable.
\end{definition}

After $r$ rounds,
note that $\cK^r$ only contains simplexes with dimension exactly $n -rk$.
In~\cite{ConcurrentShellable,MauriceBook},
the following lemmas are proved:

\begin{lemma}
\label{lemma-crashshellable}
For $1 \le i \le r$,
the operator $\cC^i: \cK^{i-1} \to 2^{\cK^i}$ is a $(k - 1)$-shellable carrier map.
\end{lemma}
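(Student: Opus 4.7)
The plan is to verify each of the three conditions required by Definition~\ref{definition-mapshellable}: strictness of $\cC^i$, the dimension inequality $\dim(\cC^i(\sigma)) > (k-1) - \codim_{\cK^{i-1}}(\sigma)$, and shellability of each image $\cC^i(\sigma)$.

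For strictness, I would check directly from the definition that $\cC^i(\sigma \cap \sigma') = \cC^i(\sigma) \cap \cC^i(\sigma')$: the $(n-ik)$-faces of $\sigma \cap \sigma'$ are exactly the common elements of $\Faces^{n-ik}(\sigma) \cap \Faces^{n-ik}(\sigma')$, and for any such shared $\tau$, the pseudosphere built with the views compatible with both $\sigma$ and $\sigma'$ is precisely the intersection of the two larger pseudospheres. For the dimension inequality, I would observe that each pseudosphere $\Psi(\names(\tau); \piece{\tau}{\sigma})$ has dimension $|\names(\tau)| - 1 = n - ik$; if $\sigma$ has dimension $d$, then $\codim_{\cK^{i-1}}(\sigma) = n - (i-1)k - d$, so the required inequality rearranges to $d < 2(n - ik) + 1$, which holds under the standing assumption that $n$ is sufficiently large relative to $t$ and $k$ (as in the crash-failure setting of~\cite{ConcurrentShellable}).

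The main obstacle is establishing shellability of the whole image $\cC^i(\sigma)$, which is a union of one pseudosphere per face $\tau \in \Faces^{n-ik}(\sigma)$. Each individual pseudosphere is shellable by Lemma~\ref{lemma-pseudosphere-shellable}, so the plan is to concatenate those shellings: fix a linear order $\tau_1, \ldots, \tau_m$ on $\Faces^{n-ik}(\sigma)$ (for instance, lexicographic in the $k$-subset of names dropped from $\sigma$), and within each block $\Psi(\names(\tau_j); \piece{\tau_j}{\sigma})$ use the shelling supplied by Lemma~\ref{lemma-pseudosphere-shellable}. The concatenation is a shelling provided that whenever a facet $\phi$ from block $j$ is attached, its intersection with the union of all previously attached facets is a pure $(\dim(\phi) - 1)$-dimensional subcomplex of $\phi$.

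The delicate verification is at the block boundaries, where a facet $\phi$ from block $j$ meets a facet $\phi'$ from an earlier block $l < j$: their intersection lies in a sub-pseudosphere on $\names(\tau_j \cap \tau_l)$, and one must exhibit, for every such $\phi'$, a previously attached facet $\phi_k$ with $(\phi \cap \phi') \subseteq (\phi \cap \phi_k)$ and $|\phi \setminus \phi_k| = 1$, as in the criterion following Definition~\ref{definition-shellable}. Using the block order on the dropped $k$-subsets of names together with the shelling inside each pseudosphere, such a $\phi_k$ is constructed by flipping a single view coordinate at a time; the verification then reduces to the standard combinatorial bookkeeping carried out in~\cite{ConcurrentShellable,MauriceBook}, which is the part I expect to be the heaviest to write out in full.
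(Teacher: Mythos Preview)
Your proposal is correct and follows exactly the argument given in the references the paper cites; note that the paper does not prove Lemma~\ref{lemma-crashshellable} itself but imports it directly from~\cite{ConcurrentShellable,MauriceBook}. The concatenated-shelling strategy you outline---ordering the $(n-ik)$-faces $\tau$ of $\sigma$, shelling each pseudosphere $\Psi(\names(\tau);\piece{\tau}{\sigma})$ in turn, and handling block boundaries by the single-coordinate-flip criterion after Definition~\ref{definition-shellable}---is precisely the proof found in those sources, so there is nothing to add.
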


\begin{lemma}
\label{lemma-compositionshellable}
If $\cM^1, \ldots, \cM^x$ are all $q$-shellable carrier maps,
and $\cM^{x+1}$ is a $q$-connected carrier map,
the composition $\cM^1 \circ \ldots \cM^x \circ \cM^{x+1}$ is a $q$-connected carrier map,
for any $x \ge 0$.
\end{lemma}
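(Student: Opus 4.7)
The plan is to reduce to a single composition step via induction on $x$. The base case $x = 0$ is trivial, since the composition is just the $q$-connected map $\cM^{x+1}$. For the inductive step, first observe that every $q$-shellable carrier map is also $q$-connected: each $\Phi(\sigma)$ is shellable of dimension exceeding $q - \codim_{\cK}(\sigma)$, so Lemma~\ref{lemma-shellable-connected} (applied with $k = q - \codim_{\cK}(\sigma) + 1$) yields $(q - \codim_{\cK}(\sigma))$-connectivity. It then suffices to prove a composition principle: if $\cM \colon \cK \to 2^{\cL}$ is $q$-shellable and $\cN \colon \cL \to 2^{\cP}$ is $q$-connected, then $\cN \circ \cM$ is $q$-connected. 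Applying this principle with $\cM := \cM^1$ and $\cN := \cM^2 \circ \cdots \circ \cM^{x+1}$ (the latter being $q$-connected by the inductive hypothesis) closes the induction.

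For the composition principle, fix $\sigma \in \cK$ with $c = \codim_{\cK}(\sigma)$. The dimension bound $\dim((\cN \circ \cM)(\sigma)) > q - c$ follows by chaining the individual dimension bounds on $\cM$ and $\cN$. The crux is $(q - c)$-connectivity of $(\cN \circ \cM)(\sigma) = \bigcup_{\tau \in \cM(\sigma)} \cN(\tau)$. The approach is to cover this complex by $\{\cN(\phi_i) : \phi_i \text{ a facet of } \cM(\sigma)\}$ and invoke the Nerve Theorem (Theorem~\ref{theorem-nerve}). Because $\cN$ is a strict carrier map, any nonempty intersection satisfies $\bigcap_{j \in J} \cN(\phi_{i_j}) = \cN\!\left(\bigcap_{j \in J} \phi_{i_j}\right)$; that is, it is the image under $\cN$ of a single face $\mu_J \in \cL$, whose connectivity is prescribed by the $q$-connectivity of $\cN$. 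One then verifies that the nerve of the cover is itself $(q - c)$-connected, which is where the shellability of $\cM(\sigma)$ enters: a shelling gives a very controlled intersection pattern of facets, and the nerve inherits the needed connectivity from the shellable complex.

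The main obstacle will be the codimension bookkeeping that threads through the Nerve Theorem hypothesis. Concretely, for each $J$ indexing a $(|J| - 1)$-simplex of the nerve, one needs $\cN(\mu_J)$ to be at least $(q - c - |J| + 1)$-connected; by the $q$-connectivity of $\cN$ this reduces to controlling $\codim_{\cL}(\mu_J)$ in terms of $c$ and $|J|$. Here the identity $\dim \phi_i - \dim \mu_J \le |J| - 1$, coming from purity and the shelling structure of $\cM(\sigma)$, should deliver the right estimate. A cleaner alternative, if the codimension arithmetic becomes cumbersome, is to bypass the Nerve Theorem and induct directly along the shelling order $\phi_0, \ldots, \phi_s$: setting $L_i = \phi_0 \cup \cdots \cup \phi_i$, the identity $\cN(L_i) = \cN(L_{i-1}) \cup \cN(\phi_i)$ together with strictness $\cN(L_{i-1}) \cap \cN(\phi_i) = \cN(L_{i-1} \cap \phi_i)$ and a standard connectivity-gluing lemma (if $A, B$ are $(q-c)$-connected and $A \cap B$ is $(q - c - 1)$-connected, then $A \cup B$ is $(q - c)$-connected) propagates the desired connectivity one facet at a time along the shelling.
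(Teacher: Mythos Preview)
The paper does not prove this lemma; it cites \cite{ConcurrentShellable,MauriceBook}. Your overall strategy---reduce by induction to a single composition step and then exploit the shelling of $\cM(\sigma)$---is the one used in those references, and your shelling-induction alternative is in fact the route that works.

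The Nerve-Theorem route as you sketch it has a real gap: the inequality $\dim \phi_i - \dim \mu_J \le |J| - 1$ is \emph{not} a consequence of shellability. For instance, the three $2$-simplices $\phi_0=\{a,b,c\}$, $\phi_1=\{b,c,d\}$, $\phi_2=\{c,d,e\}$ form a shellable complex, yet $\phi_0 \cap \phi_2 = \{c\}$, giving $\dim\phi_i - \dim\mu_J = 2 > 1 = |J|-1$. Shellability controls $L_{i-1} \cap \phi_i$ (always a nonempty union of codimension-one faces of $\phi_i$, itself shellable), not arbitrary $|J|$-wise intersections of facets; this is precisely why the inductive gluing along the shelling order succeeds while the one-shot global Nerve argument does not. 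One further subtlety you should track in the gluing argument: to conclude that $\cN(\phi_i)$ is $(q-c)$-connected you need $\codim_{\cL}(\phi_i) \le c$, i.e., the facets of $\cM(\sigma)$ must be facets of $\cL$. The bare hypothesis $\dim(\cM(\sigma)) > q - c$ does not give this in the abstract; it holds for the crash operators $\cC^i$ in this paper (where $\cC^i(\sigma)$ is always full-dimensional in $\cK^i$), and is assumed, explicitly or implicitly, in the cited references.
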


\textbf{Equivocation and Interpretation. }
After the crash-failure rounds,
if $m > 0$
the adversary picks one of the remaining processes to behave maliciously at round $r + 1$.
This process, say $P_b$, may send different views to different processes
(which is technically called \emph{equivocation}),
but,
informally speaking,
all views are ``plausible.''
For example,
two non-faulty processes $P_i$ and $P_j$
could be indecisive after round $r$ on whether the global state
is $\sigma_1$ or $\sigma_2$ in $\cK^r$,
while $P_b$,
a Byzantine process,
sends a state corresponding to $\sigma_1$ to $P_i$,
and a state corresponding to $\sigma_2$ to $P_j$.
The faulty process $P_b$ \emph{does not reveal} its Byzantine nature,
yet it \emph{promotes ambiguity} in the state information diffusion.

At the final round,
when a non-faulty process receives the states sent from the other processes,
it must decide correctly even if one other process equivocates.
If the non-faulty process can receive simplexes $\sigma_1$ and $\sigma_2$,
representing global states that differ in only one process's contribution
(that is, $\dim(\sigma_1 \cap \sigma_2) = n -rk - 1$),
then the \emph{interpretation} of a message containing one such state
must be the same as a message containing the other.
We capture this notion using the \emph{equivocation} operator,
called $\cE$,
describing the behavior of a Byzantine process,
coupled with an \emph{interpretation} operator,
called $\interp$,
describing the required behavior of non-faulty processes.
Informally,
$\interp(\sigma_1) = \interp(\sigma_2)$
for processes in $\names(\tau)$,
where $\tau = \sigma_1 \cap \sigma_2$ with $\dim(\tau) = n -rk - 1$.
Formally:

\begin{definition}
\label{definition-interp}
For any simplexes $\sigma_1$ and $\sigma_2$ in $\cK$,
with $\dim(\cK) = n - rk$,
let $(P_i, \interp(\sigma_1)) = (P_i, \interp(\sigma_2))$ if and only if
$\sigma_1 = \sigma_2$; \textbf{or}
$P_i \in \names(\tau)$ where $\tau = \sigma_1 \cap \sigma_2$ and $\dim(\tau) = n -rk - 1$.
\end{definition}

\begin{definition}
\label{definition-roundop-equivocate}
For any pure simplicial complexes $\cK$ and $\cL$
with $\dim(\cK) \le n - rk$ and $\cK \supseteq \cL$,
the $\cK$-equivocation operator $\cE_{\cK}$ is
\begin{align}
\label{eq-roundop-equivocate}
\cE_{\cK}(\cL) & = \bigcup_{\tau \in \Faces^{n-rk-1}(\cL)}
		             \Psi(\names(\tau); \set{\interp(\sigma^*): \sigma^* \in \cK, \sigma^* \supset \tau})) \mathrm{.}
\end{align}
\end{definition}
Note that $\cE_{\cK}(\cL) = \emptyset$ whenever $\dim(\cL) < n -rk - 1$ or $\dim(\cK) < n -rk$,
and also that
\begin{equation}
\cE_{\cK}(\sigma) = \bigcup_{\tau \in \Faces^{n-rk-1}(\sigma)} \Psi(\names(\tau); \interp(\sigma))
\end{equation}
for any $\sigma \in \cK$ with $\dim(\sigma) = n -rk$.
For convenience of notation,
define $\cE_{\cK}(\cK) = \cE(\cK)$.

Next, we investigate some technical properties of these constructions
that allow us to prove that the final complex is $(k-1)$-connected.
\begin{lemma}
\label{lemma-eqv:carrier}
For any pure, shellable simplicial complex with $\dim(\cK) \le n - rk$,
the $\cK$-equivocation operator $\cE_{\cK}$ is a carrier map.
\end{lemma}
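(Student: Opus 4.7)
The plan is to verify directly the two defining properties of a carrier map (Sec.~\ref{Sec-CombinatorialTools}). First, I will confirm that for every $\sigma \in \cK$ the image $\cE_{\cK}(\sigma)$, understood as $\cE_{\cK}$ applied to the sub-complex generated by $\sigma$, is a subcomplex of the ambient complex. Second, I will establish the intersection-preservation property $\cE_{\cK}(\sigma_1 \cap \sigma_2) \subseteq \cE_{\cK}(\sigma_1) \cap \cE_{\cK}(\sigma_2)$ for all $\sigma_1, \sigma_2 \in \cK$. The first property is immediate from Definition~\ref{definition-roundop-equivocate}: $\cE_{\cK}(\sigma)$ is a finite union of pseudospheres (Definition~\ref{definition-pseudosphere}), each of which is a simplicial complex, so the union is a subcomplex.

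I will reduce the intersection-preservation property to a monotonicity statement: if $\sigma \subseteq \sigma'$ in $\cK$, then $\cE_{\cK}(\sigma) \subseteq \cE_{\cK}(\sigma')$. Granting monotonicity, the inclusions $\sigma_1 \cap \sigma_2 \subseteq \sigma_i$ ($i = 1, 2$) yield the carrier-map inclusion directly.

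The crux of the argument, which I expect to be the main obstacle, is the observation that each pseudosphere arising in $\cE_{\cK}$ depends only on its index $\tau \in \Faces^{n-rk-1}(\cK)$ and on $\cK$ itself, not on the particular enclosing simplex under consideration. Concretely, I will argue that $\Psi(\names(\tau); \{\interp(\sigma^*) : \sigma^* \in \cK,\ \sigma^* \supset \tau\}) = \Psi(\names(\tau); \interp(\sigma))$ for every facet $\sigma \supset \tau$. This collapse follows from Definition~\ref{definition-interp}: for any two facets $\sigma_a, \sigma_b$ of $\cK$ containing $\tau$, either $\sigma_a = \sigma_b$ (trivial), or $\sigma_a \cap \sigma_b \supseteq \tau$ has dimension at most $n-rk-1$ (being properly contained in the $(n-rk)$-dimensional $\sigma_a$) and at least $\dim(\tau) = n-rk-1$, so exactly $n-rk-1$, which triggers clause~(2) of Definition~\ref{definition-interp} and gives $(P_i, \interp(\sigma_a)) = (P_i, \interp(\sigma_b))$ for every $P_i \in \names(\tau)$. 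Hence all labels $\interp(\sigma^*)$ are identified at the vertices indexed by $\names(\tau)$, collapsing the two pseudospheres into one.

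With this identification in hand, monotonicity is routine: if $\sigma \subseteq \sigma'$ then $\Faces^{n-rk-1}(\sigma) \subseteq \Faces^{n-rk-1}(\sigma')$, and each pseudosphere indexed by a common $\tau$ coincides in both unions, so every summand of $\cE_{\cK}(\sigma)$ appears verbatim in $\cE_{\cK}(\sigma')$. Degenerate cases ($\dim(\cK) < n-rk$, or $\dim(\sigma_i) < n-rk-1$) are trivial because $\cE_{\cK}$ is empty there, as already noted after Definition~\ref{definition-roundop-equivocate}. Combining these ingredients yields the carrier-map property, completing the proof.
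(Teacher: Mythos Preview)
Your proof is correct and follows the same monotonicity route as the paper: reduce the carrier-map inclusion to $\cE_{\cK}(\tau)\subseteq\cE_{\cK}(\sigma)$ for $\tau\subseteq\sigma$, then dispose of the trivial dimension cases and handle the single non-trivial one. The ``collapse'' you call the crux is a valid observation (and matches the paper's remark after Definition~\ref{definition-roundop-equivocate}), but it is not needed for monotonicity here: by the very form of Definition~\ref{definition-roundop-equivocate}, the pseudosphere indexed by an $(n-rk-1)$-face $\tau$ already depends only on $\tau$ and $\cK$, so $\Faces^{n-rk-1}(\sigma)\subseteq\Faces^{n-rk-1}(\sigma')$ immediately gives $\cE_{\cK}(\sigma)\subseteq\cE_{\cK}(\sigma')$.
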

\begin{proof}
Let $\tau \subseteq \sigma \in \cK$.
We show that $\cE_{\cK}(\tau) \subseteq \cE_{\cK}(\sigma)$.
If $\dim(\tau) < n - rk - 1$ then $\cE_{\cK}(\tau) = \emptyset$
and $\cE_{\cK}(\tau) \subseteq \cE_{\cK}(\sigma)$ for any $\sigma \supseteq \tau \in \cK$.
Otherwise,
if $\dim(\tau) = \dim(\sigma)$ then $\tau = \sigma$ and $\cE_{\cK}(\tau) = \cE_{\cK}(\sigma)$,
as we assumed that $\sigma \supseteq \tau \in \cK$.
The remaining case is when
$\dim(\tau) = n - rk - 1$ and $\dim(\sigma) = n - rk$,
which makes $\cE_{\cK}(\tau) \subseteq \cE_{\cK}(\sigma)$
in light of Definition~\ref{definition-roundop-equivocate}.
\end{proof}

Let $(\cC^r \circ \cE)$ be the composite map such that $(\cC^r \circ \cE)(\sigma) = \cE_{\cC^r(\sigma)}(\cC^r(\sigma))$.
While,
for an arbitrary complex $\cK$,
$\cE_{\cK}$ is not a strict carrier map \emph{per se},
we show in the following lemmas that $(\cC^r \circ \cE)$ is a $(k - 1)$-connected carrier map.
Lemma~\ref{lemma-lasttwo:strict} shows that $(\cC^r \circ \cE)$ is a strict carrier map,
and Lemma~\ref{lemma-lasttwo:conn} shows that
for any $\sigma \in \cK^{r-1}$,
$(\cC^r \circ \cE)(\sigma)$ is $((k - 1) - \codim_{\cK^{r-1}}(\sigma))$-connected.

\begin{lemma}
\label{lemma-lasttwo:strict}
$(\cC^r \circ \cE)$ is a strict carrier map.
\end{lemma}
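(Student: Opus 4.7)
The plan is to verify the two properties defining a strict carrier map for $(\cC^r \circ \cE)$: monotonicity along inclusions, and the intersection identity $(\cC^r \circ \cE)(\sigma_1 \cap \sigma_2) = (\cC^r \circ \cE)(\sigma_1) \cap (\cC^r \circ \cE)(\sigma_2)$ for $\sigma_1, \sigma_2 \in \cK^{r-1}$. Monotonicity is handled first. Suppose $\mu \subseteq \sigma$ in $\cK^{r-1}$. Since $\cC^r$ is a carrier map by Lemma~\ref{lemma-crashshellable}, setting $\cM = \cC^r(\mu)$ and $\cS = \cC^r(\sigma)$ we have $\cM \subseteq \cS$, and it suffices to show $\cE_{\cM}(\cM) \subseteq \cE_{\cS}(\cS)$. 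Reading Definition~\ref{definition-roundop-equivocate}, this construction is monotone in both its ``ambient'' and its ``input'' argument: every $(n-rk-1)$-face of $\cM$ is also such a face of $\cS$, and for any such face $\tau$ we have $\set{\interp(\sigma^*) : \sigma^* \in \cM, \sigma^* \supset \tau} \subseteq \set{\interp(\sigma^*) : \sigma^* \in \cS, \sigma^* \supset \tau}$, so the corresponding pseudospheres and their unions grow accordingly. Specializing to $\sigma_1 \cap \sigma_2 \subseteq \sigma_j$ for $j = 1, 2$ already delivers the easy half of the intersection identity.

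For the reverse inclusion, I exploit that $\cC^r$ is not merely a carrier map but $(k-1)$-shellable and hence strict, so $\cC^r(\sigma_1 \cap \sigma_2) = \cS_1 \cap \cS_2$, writing $\cS_j = \cC^r(\sigma_j)$. Setting $\cI = \cS_1 \cap \cS_2$, the remaining task reduces to $\cE_{\cS_1}(\cS_1) \cap \cE_{\cS_2}(\cS_2) \subseteq \cE_{\cI}(\cI)$. Given a simplex $\mu$ in this intersection, I unpack Definition~\ref{definition-roundop-equivocate} twice to extract witnessing faces $\tau^1 \in \Faces^{n-rk-1}(\cS_1)$ and $\tau^2 \in \Faces^{n-rk-1}(\cS_2)$, together with top-dimensional simplexes $\sigma^* \in \cS_1$ and $\sigma^{*\prime} \in \cS_2$ whose interpretations realize the vertex values of $\mu$. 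The goal is then to combine these two sets of witnesses into a single $(n-rk-1)$-face and a collection of top-dimensional simplexes inside $\cI$, which would place $\mu$ in $\cE_{\cI}(\cI)$.

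The principal obstacle is precisely this combination step. The key lever is Definition~\ref{definition-interp}: whenever $(P_i, \interp(\sigma^*)) = (P_i, \interp(\sigma^{*\prime}))$ with $\sigma^* \in \cS_1$ and $\sigma^{*\prime} \in \cS_2$, either $\sigma^* = \sigma^{*\prime}$ (so both lie in $\cI$) or they share a common $(n-rk-1)$-face through $P_i$ that lies in both $\cS_1$ and $\cS_2$, hence in $\cI$. Propagating this identification across all vertices of $\mu$, and exploiting the coordinate-wise structure of pseudospheres, should force a common witnessing $(n-rk-1)$-face together with the relevant top-dimensional simplexes to lie inside $\cI$. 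This reduces to combinatorial bookkeeping on the equivalence classes generated by $\interp$ rather than any deeper topological argument, but it is the one place where some care is required.
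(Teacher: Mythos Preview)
Your overall strategy matches the paper's: reduce via strictness of $\cC^r$ to showing $\cE(\cS_1) \cap \cE(\cS_2) \subseteq \cE(\cI)$ with $\cI = \cS_1 \cap \cS_2$, handle the easy inclusion by monotonicity of the equivocation operator in both arguments, and invoke Definition~\ref{definition-interp} for the hard direction.

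The gap is in the final step. You correctly deduce that when $(P_i, \interp(\sigma^*)) = (P_i, \interp(\sigma^{*\prime}))$ with $\sigma^* \neq \sigma^{*\prime}$, the common face $\gamma = \sigma^* \cap \sigma^{*\prime}$ has dimension $n-rk-1$ and lies in $\cI$. But to place $\mu$ in $\cE_{\cI}(\cI)$ you need a \emph{top-dimensional} simplex of $\cI$ over $\gamma$ realizing the right interpretation, and your sketch does not say where this comes from: the original witnesses $\sigma^*, \sigma^{*\prime}$ need not lie in $\cI$, and ``combinatorial bookkeeping on equivalence classes'' alone does not manufacture such a simplex. The missing ingredient, which the paper makes explicit, is that $\cI = \cC^r(\sigma_1 \cap \sigma_2)$ is, when non-empty, \emph{pure of dimension $n-rk$}, being itself the image of a simplex under $\cC^r$. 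Purity furnishes a fresh facet $\gamma' \in \cI$ with $\gamma' \supset \gamma$; then Definition~\ref{definition-interp} applied once more gives $\interp(\gamma') = \interp(\sigma^*) = \interp(\sigma^{*\prime})$ for all processes in $\names(\gamma)$, and this $\gamma'$ --- not $\sigma^*$ or $\sigma^{*\prime}$ --- is the witness that places $\mu$ inside $\cE_{\cI}(\cI)$. Without invoking purity of $\cI$, the argument does not close.
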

\begin{proof}
Consider $\sigma,\tau \in \cK^{r-1}$,
with $\cL = \cC^r(\sigma)$ and $\cM = \cC^r(\tau)$.
Both $\cL$ and $\cM$ are pure, shellable simplicial complexes with dimension $n - rk$
(Definition~\ref{definition-roundop-crash} and Lemma~\ref{lemma-crashshellable}).
Therefore,
both the $\cL$-equivocation and $\cM$-equivocation operators are well-defined.
Also, $\cC^r$ is a strict carrier map,
hence $\cL \cap \cM = \cC^r(\sigma) \cap \cC^r(\tau) = \cC^r(\sigma \cap \tau)$.
Note that $\cL \cap \cM = \cC^r(\sigma \cap \tau)$,
if not empty,
is a pure, shellable simplicial complex with dimension $n - rk$.
Therefore,
the $(\cL \cap \cM)$-equivocation operator is well-defined.

First, we show that $\cE(\cL) \cap \cE(\cM) \subseteq \cE(\cL \cap \cM)$,
which implies one direction of our equality:
\begin{equation*}
\cE(\cC^r(\sigma)) \cap \cE(\cC^r(\tau)) \subseteq
\cE(\cC^r(\sigma) \cap \cC^r(\tau)) =
\cE(\cC^r(\sigma \cap \tau)) \mathrm{.}
\end{equation*}
For clarity, let $F(\cK) = \Faces^{n-rk-1}(\cK)$. Then,
\begin{align*}
\cE(\cL) \cap \cE(\cM)
= \bigcup_{\mu \in F(\cL)} \cE_{\cL}(\mu)
    \quad \cap \quad
    \bigcup_{\nu \in F(\cM)} \cE_{\cM}(\nu)
= \bigcup_{\substack{\mu \in F(\cL) \\ \nu \in F(\cM)}}
	\cE_{\cL}(\mu) \cap \cE_{\cM}(\nu) \mathrm{.}
\end{align*}
For arbitrary $\mu \in F(\cL)$ and $\nu \in F(\cM)$,
if $\cE_{\cL}(\mu) \cap \cE_{\cM}(\nu) \ne \emptyset$, consider two cases:
\begin{enumerate}
\item
$\mu$ and $\nu$ are proper faces of $\phi \in (\cL \cap \cM)$.
In this case,
$$  \cE_{\cL}(\mu) \cap \cE_{\cM}(\nu) = \Psi(\names(\mu) \cap \names(\nu); \interp(\phi)) \mathrm{,} $$
which is inside $\cE_{\cL \cap \cM}(\phi) \subseteq \cE_{\cL \cap \cM}(\cL \cap \cM)$.

\item
Otherwise, $\mu \subset \phi_1 \in \cL$ or $\nu \subset \phi_2 \in \cM$.
In this case,
$$ \cE_{\cL}(\mu) \cap \cE_{\cM}(\nu) = \Psi(\names(\mu) \cap \names(\nu); {\interp(\phi_1) \cap \interp(\phi_2)}) \mathrm{.} $$
By Definition~\ref{definition-interp},
the above is non-empty only when
$\interp(\phi_1) = \interp(\alpha)$ with $\alpha \in \cL$,
$\interp(\phi_2) = \interp(\beta)$ with $\beta \in \cM$,
and there exists a non-empty set $\bbP'$
such that $\bbP' \subseteq \names(\mu) \cap \names(\nu) \subseteq \names(\gamma)$,
where $\gamma = \alpha \cap \beta$ with $\dim(\gamma) = n -rk -1$.
Let $\bbP''$ be a maximal $\bbP'$ satisfying such condition.
Note that $\gamma \in (\cL \cap \cM)$, so $(\cL \cap \cM) \ne \emptyset$.

Since $(\cL \cap \cM)$ is non-empty,
it is pure, shellable with dimension $n -rk$,
there must exist a simplex $\gamma' \supset \gamma$ with dimension $n -rk$.
Moreover,
$\interp(\gamma') = \interp(\alpha) = \interp(\phi_1)$
and
$\interp(\gamma') = \interp(\beta) = \interp(\phi_2)$
for processes in $\names(\gamma)$,
given the definition of $\interp$.
In conclusion, we have
$\cE_{\cL}(\mu) \cap \cE_{\cM}(\nu) = \Psi(\bbP''; \interp(\gamma')) \subseteq \Psi(\names(\gamma); \interp(\gamma'))$,
which is inside $\cE_{\cL \cap \cM}(\gamma') \subseteq \cE_{\cL \cap \cM}(\cL \cap \cM)$.
\end{enumerate}

In the other direction,
we have
$\cE(\cL \cap \cM) \stackrel{\textrm{def}}{=} \cE_{\cL \cap \cM}(\cL \cap \cM) \subseteq \cE_{\cL}(\cL \cap \cM)
\subseteq \cE_{\cL}(\cL) \stackrel{\textrm{def}}{=} \cE(\cL)$,
since
(i) $\cE_{\cL \cap \cM}(\cX) \subseteq \cE_{\cL}(\cX)$ for any $\cX \subseteq \cL \cap \cM$
(Definition~\ref{definition-roundop-equivocate});
and (ii)
$\cE_{\cL}$ is a carrier map (Lemma~\ref{lemma-eqv:carrier}).
The same argument proves that $\cE(\cL \cap \cM) \subseteq \cE(\cM)$,
and therefore $\cE(\cL \cap \cM) \subseteq \cE(\cL) \cap \cE(\cM)$.
\end{proof}

\begin{lemma}
\label{lemma-lasttwo:conn}
For any $\sigma \in \cK^{r-1}$,
$\cE(\cC^r(\sigma))$ is
$((k - 1) - \codim_{\cK^{r-1}}(\sigma))$-connected.
\end{lemma}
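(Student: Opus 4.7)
The plan is to apply the Nerve Theorem (Theorem~\ref{theorem-nerve}) to a cover of $\cE(\cC^r(\sigma))$ indexed by the facets of $\cL := \cC^r(\sigma)$. Setting $q = (k - 1) - \codim_{\cK^{r-1}}(\sigma)$, I first observe that by Lemma~\ref{lemma-crashshellable} together with Definition~\ref{definition-mapshellable}, $\cL$ is a pure shellable complex of dimension $n - rk$, and $n - rk > q$, so Lemma~\ref{lemma-shellable-connected} makes $\cL$ itself $q$-connected. My goal is to transfer $q$-connectivity from $\cL$ to $\cE(\cL)$.

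For each facet $\sigma^* \in \cL$, let $\cA_{\sigma^*} = \cE_{\cL}(\sigma^*) = \bigcup_{\tau \in \Faces^{n-rk-1}(\sigma^*)} \Psi(\names(\tau); \interp(\sigma^*))$. Unpacking Definition~\ref{definition-interp}, the symbol $\interp(\sigma^*)$ contributes a single value per process in $\names(\tau)$, so each $\cA_{\sigma^*}$ is combinatorially the boundary of a relabeled copy of $\sigma^*$: a pseudosphere which is shellable (Lemma~\ref{lemma-pseudosphere-shellable}) of dimension $n - rk - 1 \ge q$, hence $q$-connected (Lemma~\ref{lemma-shellable-connected}). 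The family $\{\cA_{\sigma^*}\}$ covers $\cE(\cL)$ because every $(n - rk - 1)$-face $\tau$ of $\cL$ lies in some facet of the pure complex $\cL$.

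Next I would verify the intersection hypothesis of the Nerve Theorem. For a simplex $J$ of the nerve, using Definition~\ref{definition-interp} to unify the interpretation symbols on the shared face, $\bigcap_{\sigma^* \in J} \cA_{\sigma^*}$ is itself a pseudosphere over $\names(\eta)$, where $\eta = \bigcap_{\sigma^* \in J} \sigma^*$; it is therefore shellable of dimension at least $n - rk - |J|$. A short dimension count, combined with the purity of $\cL$, gives that this intersection is at least $(n - rk - |J| - 1)$-connected, which exceeds $q - |J| + 1$ since $n - rk \ge q + 2$ in our setting.

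The crux is showing that the nerve $\cN = \cN(\{\cA_{\sigma^*}\})$ is $q$-connected. By the analysis just above, a set $J$ of facets is a simplex of $\cN$ exactly when those facets share a common face of $\cL$; so $\cN$ records the incidence pattern of $\cL$'s facets along shared faces. I plan to exploit the shelling order of $\cL$ from Lemma~\ref{lemma-crashshellable} to build a compatible shelling of $\cN$ of dimension at least $q$, and then invoke Lemma~\ref{lemma-shellable-connected} to conclude $q$-connectivity. The main obstacle lies here: one must carefully track the interpretation equivalences of Definition~\ref{definition-interp} so that the nerve faithfully mirrors the face-incidence structure of $\cL$ without spurious identifications, and then transfer the shelling without losing dimension. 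Once that is in hand, the Nerve Theorem assembles the pseudosphere pieces and delivers the desired $q$-connectivity of $\cE(\cC^r(\sigma))$.
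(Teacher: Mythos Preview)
Your cover $\{\cA_{\sigma^*}\}$ and the verification of the Nerve Theorem's \emph{intersection} hypothesis are essentially fine (modulo terminology: each $\cA_{\sigma^*}$ is the boundary of a simplex, a $(n-rk-1)$-sphere, not a pseudosphere in the sense of Definition~\ref{definition-pseudosphere}; but it is shellable, so the connectivity claim survives). The real problem is your analysis of the nerve itself.

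Your assertion that ``a set $J$ of facets is a simplex of $\cN$ exactly when those facets share a common face of $\cL$'' is false. Because Definition~\ref{definition-interp} identifies $(P_i,\interp(\sigma_1))$ with $(P_i,\interp(\sigma_2))$ \emph{only} when $\dim(\sigma_1\cap\sigma_2)=n-rk-1$, two facets $\sigma_1^*,\sigma_2^*$ of $\cL$ with $0\le\dim(\sigma_1^*\cap\sigma_2^*)<n-rk-1$ share a face of $\cL$ but satisfy $\cA_{\sigma_1^*}\cap\cA_{\sigma_2^*}=\emptyset$. Such pairs do occur in $\cL=\cC^r(\sigma)$: two facets of the pseudosphere $\Psi(\names(\tau);[\tau:\sigma])$ that differ in two or more labels intersect in codimension $\ge 2$. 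Hence your $\cN$ is a \emph{proper} subcomplex of the facet-incidence nerve of $\cL$, and the homotopy equivalence you would get for the latter (each intersection is a simplex, so the Nerve Theorem gives $\cN'\simeq\cL$) does not transfer. You are left with a ``codimension-one dual'' complex of $\cL$ whose connectivity is exactly the hard part; your plan to shell it is neither executed nor obvious, since $\cN$ need not be pure and its facets do not inherit a shelling from $\cL$ in any evident way.

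The paper sidesteps this by \emph{not} applying the Nerve Theorem once to a global facet cover. Instead it inducts along the shelling $\mu_0,\dots,\mu_\ell$ of $\cM=\cC^r(\sigma)$, proving each $\cE_\cM(\mu_x)$ and each $\bigl(\bigcup_{y<x}\cE_\cM(\mu_y)\bigr)\cap\cE_\cM(\mu_x)$ is $(2k-1)$-connected. The point of the shelling is that this intersection is exactly $\bigcup_{i\in S}\cE_\cM(\tau_i)$ for a set $S$ of \emph{proper faces} of $\mu_x$; the nerve of that cover is either a full simplex on $S$ or the codimension-one skeleton of $S$, whose connectivity is immediate. The connectivity of the whole then follows by a two-piece Mayer--Vietoris/Nerve step at each stage. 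In short, the paper trades one hard global nerve computation for a sequence of trivial local ones enabled by shellability.
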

\begin{proof}
Consider $\sigma \in \cK^{r-1}$ with $\codim_{\cK^{r-1}}(\sigma) \le k$.
By Lemma~\ref{lemma-crashshellable},
$\cM = \cC^r(\sigma)$ is a pure, shellable simplicial complex with $\dim(\cM) = n - rk = d$.
By Definition~\ref{definition-roundop-equivocate},
$\cE(\cM)$ is well-defined and
$\dim(\cE(\cM)) = n - rk - 1 = d'$.
Note that $d' \ge n - t \ge 2t \ge 2k$,
since $n + 1 > 3t$ and $t \ge k$.

First,
we show that $\cE(\cM)$ is ``highly-connected'' -- that is, $(2k - 1)$-connected.
We proceed by induction on $\mu_0 \ldots \mu_{\ell}$,
a shelling order of facets of $\cM$.
\begin{description}
	\item[Base.] We show that $\cE_{\cM}(\mu_0)$ is $(2k - 1)$-connected.
Considering Definition~\ref{definition-roundop-equivocate},
we have that $\cE_{\cM}(\mu_0) = \cE_{\cM}(\tau_0) \cup \ldots \cup \cE_{\cM}(\tau_d)$,
with $\tau_0 \ldots \tau_d$ being all the proper faces of $\mu_0$.

Consider the cover $\set{\cE_{\cM}(\tau_i): 0 \le i \le d}$ of $\cE_{\cM}(\mu_0)$,
and its associated nerve $\cN(\set{\cE_{\cM}(\tau_i): 0 \le i \le d})$.
For any index set $J \subseteq I = \set{0 \ldots d}$,
let
\begin{align*}
\cK_J =
\bigcap_{j \in J} \cE_{\cM}(\tau_j) = \Psi(\bigcap_{j \in J} \names(\tau_j); \interp(\mu_0))
\end{align*}
For any $J$ with $|J| \le d$,
we have $\cap_{j \in J} \names(\tau_j) \ne \emptyset$,
making $\cK_J$ a non-empty pseudosphere
with dimension $d' - |J| + 1 \ge 2k - |J| + 1$.
So, $\cK_J$ is $((2k - 1) -  |J| + 1)$-connected
by Lemmas~\ref{lemma-pseudosphere-shellable} and~\ref{lemma-shellable-connected}.
The nerve is hence the $(d - 1)$-skeleton of $I$,
which is $(d - 2) = (d' - 1) \ge (2k - 1)$-connected.
By the Nerve Theorem,
$\cE_{\cM}(\mu_0)$ is also $(2k - 1)$-connected.

	\item[IH.] Assume that $\cY = \cup_{0 \le y < x} \cE_{\cM}(\mu_y)$ is $(2k - 1)$ connected,
and let $\cX = \cE_{\cM}(\mu_x)$.
We must show that $\cY \cup \cX = \cup_{0 \le y \le x} \cE(\mu_y)$ is $(2k - 1)$-connected.
Note that $\cX$ is $(2k - 1)$-connected by an argument identical to the one above for the base case $\cE_{\cM}(\mu_0)$.
Besides,
\begin{align*}
\cY \cap \cX
= \left( \bigcup_{0 \le y < x} \cE_{\cM}(\mu_y) \right) \cap \cE_{\cM}(\mu_x)
= \bigcup_{0 \le y < x} (\cE_{\cM}(\mu_y) \cap \cE_{\cM}(\mu_x))
\stackrel{\star}{=} \bigcup_{i \in S} \cE_{\cM}(\tau_i) \mathrm{,}
\end{align*}
where $i \in S$ is such that $(\cup_{0 \le y < x} \, \mu_y) \cap \mu_x = \cup_{i \in S} \, \tau_i$.
The set $S$ is well-defined since $\cM$ is shellable.
The step $(\star)$ holds because:
(i) $\cY \cap \cX$ must include at least $\bigcup_{i \in S} \cE_{\cM}(\tau_i)$;
and (ii) $\cE_{\cM}(\mu_y) \cap \cE_{\cM}(\mu_x) \ne \emptyset$ only if
$\psi = \Psi(\names(\mu_y \cap \mu_x); \interp(\mu_x))$ exists,
the latter inside $\psi' = \Psi(\names(\tau_j); \interp(\mu_x))$
for some $j \in S$,
or we contradict the fact that $\cM$ is shellable.

Using an argument identical to the one for $\cE_{\cM}(\mu_0)$,
yet considering the cover $\set{\cE_{\cM}(\tau_i): i \in S}$,
the nerve of $\cX \cap \cY$ is either the $(d - 1)$-skeleton of $S$ (if $S = \set{0 \ldots d}$) or the whole simplex $S$ (otherwise).
By the Nerve Theorem,
$\cup_{i \in S} \cE_{\cM}(\tau_i)$ is $(2k - 1)$-connected.

Once again,
using the Nerve Theorem,
since $\cY$ is $(2k - 1)$-connected,
$\cX$ is $(2k - 1)$-connected,
and $\cY \cap \cX$ is $(2k - 1)$-connected,
we have that $\cY \cup \cX$ is $(2k - 1)$-connected.
\end{description}
While the equivocation operator yields high connectivity ($2k - 1$)
in the pseudosphere $\cC^r(\sigma)$,
the \emph{composition} of $\cC^r$ and $\cE_{\cC^r(\sigma)}(\cC^r(\sigma))$
limits the connectivity to $(k - 1)$,
since the former map is only defined for simplexes with codimension $\le k$.
Formally,
as $\cC^r(\sigma) \ne \emptyset$ for any simplex $\sigma \in \cK^{r-1}$ with $\codim_{\cK^{r-1}}(\sigma) \le k$,
we have that $\cE(\cC^r(\sigma))$ is $((k - 1) - \codim_{\cK^{r-1}}(\sigma))$-connected.
\end{proof}

From Lemmas~\ref{lemma-lasttwo:strict} and~\ref{lemma-lasttwo:conn},
we conclude the following.
\begin{corollary}
\label{corollary-lasttwo:conn}
$(\cC^r \circ \cE)$ is a $(k - 1)$-connected carrier map.
\end{corollary}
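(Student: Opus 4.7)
The corollary amounts to verifying the three clauses of Definition~\ref{definition-mapconnected} for the composite map $(\cC^r \circ \cE)$, with $q = k-1$. Two of the three clauses are already established elsewhere, so the plan is mostly a matter of bookkeeping: invoke the previous two lemmas and check one dimensional inequality.

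First, I would appeal directly to Lemma~\ref{lemma-lasttwo:strict} to get that $(\cC^r \circ \cE)$ is a strict carrier map, and to Lemma~\ref{lemma-lasttwo:conn} to get that $(\cC^r \circ \cE)(\sigma) = \cE(\cC^r(\sigma))$ is $((k-1) - \codim_{\cK^{r-1}}(\sigma))$-connected for every $\sigma \in \cK^{r-1}$ with $\codim_{\cK^{r-1}}(\sigma) \le k$. For $\sigma$ with larger codimension, $\cC^r(\sigma)$ has dimension smaller than $n-rk$, so $\cE(\cC^r(\sigma)) = \emptyset$ and the connectivity clause is vacuously satisfied at level $(k-1) - \codim_{\cK^{r-1}}(\sigma) < -1$.

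The only remaining piece is the dimensional inequality $\dim((\cC^r \circ \cE)(\sigma)) > (k-1) - \codim_{\cK^{r-1}}(\sigma)$. By Definitions~\ref{definition-roundop-crash} and~\ref{definition-roundop-equivocate}, for any $\sigma \in \cK^{r-1}$ with $\codim_{\cK^{r-1}}(\sigma) \le k$ the complex $\cC^r(\sigma)$ is pure of dimension $n - rk$, whence $\dim(\cE(\cC^r(\sigma))) = n - rk - 1$. Using the standing hypothesis $n + 1 > 3t$ together with $r \le t/k$ and $t \ge k$, already exploited in the proof of Lemma~\ref{lemma-lasttwo:conn}, we obtain $n - rk - 1 \ge n - t - 1 \ge 2t \ge 2k > k - 1 \ge (k-1) - \codim_{\cK^{r-1}}(\sigma)$, which is exactly the required inequality.

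There is no real obstacle here; the two preceding lemmas do all the actual topological work. The only subtlety is being careful about simplexes $\sigma$ whose codimension exceeds $k$, where the composite map produces the empty complex and the conditions of Definition~\ref{definition-mapconnected} must be read as vacuous. Combining the three ingredients above yields, directly from Definition~\ref{definition-mapconnected}, that $(\cC^r \circ \cE)$ is a $(k-1)$-connected carrier map.
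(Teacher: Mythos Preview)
Your proposal is correct and follows the paper's approach exactly: the paper simply states that the corollary follows from Lemmas~\ref{lemma-lasttwo:strict} and~\ref{lemma-lasttwo:conn}, and you have spelled out precisely how those two lemmas feed into the clauses of Definition~\ref{definition-mapconnected}. One tiny arithmetic remark: the step $n - t - 1 \ge 2t$ in your chain is off by one under the hypothesis $n+1>3t$ alone, but since the equivocation round is only invoked when $m>0$ you actually have $rk \le t-1$, giving $n - rk - 1 \ge n - t \ge 2t$ as in the proof of Lemma~\ref{lemma-lasttwo:conn}; either way the final inequality $\dim > k-1$ goes through.
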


\begin{theorem}
An adversary can keep the protocol complex of a Byzantine synchronous system $(k - 1)$-connected
for $\lceil t/k \rceil$ rounds.
\end{theorem}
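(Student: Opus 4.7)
The plan is to combine the carrier-map machinery established in this section with the adversarial schedule of Eq.~(\ref{equation-setting}), splitting into two cases according to whether $m = t \bmod k$ vanishes. In each case the strategy is the same: realize the full composition from $\cK^0$ to the round-$\lceil t/k \rceil$ protocol complex as a $(k-1)$-connected carrier map via Lemma~\ref{lemma-compositionshellable}, then pass from this pointwise connectivity statement to $(k-1)$-connectivity of the resulting complex itself.

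First I would dispatch the case $m = 0$, where $\lceil t/k \rceil = r$ and the adversary schedules exactly $r$ crash rounds, giving $\cK^0 \runmap{\cC^1} \cdots \runmap{\cC^r} \cK^r$. By Lemma~\ref{lemma-crashshellable} each $\cC^i$ is a $(k-1)$-shellable carrier map, and together with Lemma~\ref{lemma-shellable-connected} each is in particular also $(k-1)$-connected. Applying Lemma~\ref{lemma-compositionshellable} with $\cC^1, \ldots, \cC^{r-1}$ shellable and $\cC^r$ connected makes the full composition $\cC^1 \circ \cdots \circ \cC^r$ a $(k-1)$-connected carrier map.

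Next I would handle $m > 0$, where $\lceil t/k \rceil = r+1$ and the adversary schedules $r$ crash rounds followed by one equivocation round. By Corollary~\ref{corollary-lasttwo:conn} the composite $(\cC^r \circ \cE)$ is already a $(k-1)$-connected carrier map, while each preceding $\cC^i$ for $1 \le i \le r-1$ is $(k-1)$-shellable by Lemma~\ref{lemma-crashshellable}. A second application of Lemma~\ref{lemma-compositionshellable}, this time with $r-1$ shellable factors and the connected factor $(\cC^r \circ \cE)$, yields that $\cC^1 \circ \cdots \circ \cC^{r-1} \circ (\cC^r \circ \cE)$ is again a $(k-1)$-connected carrier map.

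Finally I would transfer the carrier-map connectivity to the final protocol complex itself. Since $\cK^0 = \Psi(\bbP, V)$ is a pseudosphere, it is shellable and highly connected (Lemmas~\ref{lemma-pseudosphere-shellable} and~\ref{lemma-shellable-connected}); covering the final complex by $\{\Phi(\sigma) : \sigma \text{ a facet of } \cK^0\}$ and invoking the Nerve Theorem (Theorem~\ref{theorem-nerve}) then gives the desired $(k-1)$-connectivity, where strictness ensures $\Phi(\sigma_1) \cap \cdots \cap \Phi(\sigma_j) = \Phi(\sigma_1 \cap \cdots \cap \sigma_j)$ and the $(q - \codim)$-connectivity bound of the carrier map supplies the hypothesis required at each $|J|$. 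The main obstacle I anticipate is exactly this last bookkeeping --- verifying the Nerve Theorem's hypothesis uniformly in codimension --- but this is precisely what the $(k-1)$-connected carrier map formalism packages, and it is handled in the cited references~\cite{ConcurrentShellable,MauriceBook}.
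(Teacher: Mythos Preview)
Your proposal is correct and follows essentially the same approach as the paper: the same case split on $m = t \bmod k$, the same appeals to Lemma~\ref{lemma-crashshellable} for the crash rounds, to Corollary~\ref{corollary-lasttwo:conn} for $(\cC^r \circ \cE)$, and to Lemma~\ref{lemma-compositionshellable} for the composition. The only difference is that the paper stops once the composition is established as a $(k-1)$-connected carrier map (so the image of any facet $\sigma \in \cI$ is $(k-1)$-connected), whereas you add the final Nerve-Theorem step to pass to the full complex; that extra step is sound and is indeed the standard argument from~\cite{ConcurrentShellable,MauriceBook}, though the paper itself omits it.
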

\begin{proof}
If $m = 0$,
$t \bmod k = 0$,
and the adversary runs only the crash rounds failing $k$ processes each time,
for $r = \lfloor t/k \rfloor = \lceil t/k \rceil$ consecutive rounds.
We have the following scenario:
\begin{displaymath}
(\cC^1 \circ \ldots \circ \cC^r)(\sigma) \mathrm{.}
\end{displaymath}
Since $\cC^i: \cK^{i-1} \to 2^{\cK^i}$ is a $(k - 1)$-shellable carrier map
for $1 \le i \le r$ (Lemma~\ref{lemma-crashshellable}),
the composition $(\cC^1 \circ \ldots \circ \cC^r)$
is a $(k - 1)$-connected carrier map for any facet $\sigma \in \cI$
(Lemma~\ref{lemma-compositionshellable}).

If $m > 0$,
the adversary performs $r$ crash rounds (failing $k$ processes each time),
followed by the extra equivocation round.
We have the following scenario:
\begin{equation}
\label{equation-equivrun}
(\cC^1 \circ \ldots \circ \cC^{r-1} \circ (\cC^r \circ \cE))(\sigma)
\mathrm{.}
\end{equation}
Since $\cC^i: \cK^{i-1} \to \cK^i$ is a $(k - 1)$-shellable carrier map
for $1 \le i \le r - 1$ (Lemma~\ref{lemma-crashshellable}),
and $(\cC^r \circ \cE)$ is a $(k - 1)$-connected carrier map
(Corollary~\ref{corollary-lasttwo:conn}),
we have that the composition above $(\cC^1 \circ \ldots \circ \cC^{r-1} \circ (\cC^r \circ \cE))$
is a $(k - 1)$-connected carrier map for any facet $\sigma \in \cI$
(Lemma~\ref{lemma-compositionshellable}).
\end{proof}

\section{$k$-Set Agreement and Lower Bound}
\label{Sec-KSetLowerBound}

The $k$-set agreement problem and connectivity are closely related.
Lemma~\ref{lemma-connectivitykset},
proved in Appendix~\ref{App-ProofsConnectivity},
shows that
no solution is possible for $k$-set agreement
with a $(k - 1)$-connected protocol complex,
which,
as seen in Sec.~\ref{Sec-ConnectivityUpperBound},
can occur at least until round $\lceil t/k \rceil$.
\begin{lemma}
\label{lemma-connectivitykset}
If, starting $\sigma \in \cI$,
the protocol complex $\cP(\sigma)$ is $(k - 1)$-connected,
then no decision function $\delta$ solves the $k$-set agreement problem.
\end{lemma}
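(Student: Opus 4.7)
I will argue by contradiction. Suppose $\delta$ solves $k$-set agreement from $\sigma$; then $\delta$ induces a simplicial map $\delta_\star : \cP(\sigma) \to \cO$ carried by $\Delta$, so $\delta_\star(\cP(\sigma)) \subseteq \Delta(\sigma)$. The strategy is to force this image into a subcomplex whose homotopy type is a $(k-1)$-sphere and then derive a contradiction with the $(k-1)$-connectivity of $\cP(\sigma)$ via a standard obstruction argument.

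The first step is to pin down the target. The interesting case is $\views(\sigma) \supseteq V^\star$ for some set $V^\star = \{v_0, \ldots, v_k\}$ of $k+1$ distinct values (if $\sigma$ spans at most $k$ views, agreement is automatic and the lemma is vacuous). Strong validity forces $\views(\delta_\star(\pi)) \subseteq \views(\sigma)$ for every $\pi$, and the agreement clause caps each output simplex at $k$ views; consequently the portion of the image using views only in $V^\star$ lies in
\[
\cO^\star \ := \ \bigcup_{V' \subsetneq V^\star} \Psi(\bbP, V').
\]
To identify the homotopy type of $\cO^\star$, I would cover it by the $k+1$ pseudospheres $\cK_j := \Psi(\bbP, V^\star \setminus \{v_j\})$ and invoke the Nerve Theorem (Theorem~\ref{theorem-nerve}): each non-empty intersection $\cK_J = \bigcap_{j \in J} \cK_j = \Psi(\bbP, V^\star \setminus \{v_j : j \in J\})$ is itself an $n$-dimensional pseudosphere, hence shellable (Lemma~\ref{lemma-pseudosphere-shellable}) and highly connected (Lemma~\ref{lemma-shellable-connected}); while the nerve is precisely the $(k-1)$-skeleton of the $k$-simplex on $\{0, \ldots, k\}$, a combinatorial $S^{k-1}$. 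This yields $\cO^\star \simeq S^{k-1}$, which in particular fails to be $(k-1)$-connected.

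The contradiction then comes from producing a $(k-1)$-cycle $Z \subseteq \cP(\sigma)$ whose image $\delta_\star(Z)$ generates $\pi_{k-1}(\cO^\star) \cong \bbZ$. Such a $Z$ is assembled by a chromatic Sperner-type argument anchored at the $k+1$ ``pure-value corners'' of $\sigma$: by strong validity, executions originating from a corner where only $v_j$ appears as a non-faulty input must be mapped into $\Psi(\bbP, \{v_j\}) \subseteq \cK_{j'}$ for every $j' \ne j$, and these restrictions glue along shared sub-carriers into a $(k-1)$-cycle that wraps once around $\cO^\star$. The $(k-1)$-connectivity of $\cP(\sigma)$ then fills $Z$ with a $k$-disk $D \subseteq \cP(\sigma)$, and $\delta_\star(D)$ supplies a null-homotopy of $\delta_\star(Z)$, contradicting essentiality. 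The hard part of the proof is exactly this last step: rigorously producing $Z$ and certifying its image is essential, which amounts to a chromatic no-retraction argument in the spirit of Sperner's lemma, and is the place where the full force of \emph{strong} validity (rather than a weaker ``some non-faulty input'' variant) is decisive.
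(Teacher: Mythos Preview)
Your overall strategy (obstruction via an essential $(k-1)$-cycle in a sphere-like target) is in the same family as the paper's, but the execution diverges at the crucial step and leaves a genuine gap.

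The paper does \emph{not} analyze the output complex $\cO^\star$ via the Nerve Theorem, nor does it try to manufacture a cycle inside a single $\cP(\sigma)$. Instead it works on the input side: fixing a value $k$-simplex $\alpha = \{u_0,\ldots,u_k\}$, it inductively builds continuous maps $g_x : |\skel^x(\alpha)| \to |\cP(\cI_x)|$, where $\cI_x = \bigcup_{\beta \in \skel^x(\alpha)} \Psi(\bbP,\beta)$. The base sends each vertex $u_j$ into $\cP(\Psi(\bbP,\{u_j\}))$, and the inductive step uses the $(k-1)$-connectivity of each $\cP(\cI_\beta)$ to fill the boundary sphere of each $x$-face $\beta$ with an $x$-disk. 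Composing $g_k$ with the continuous map induced by $\delta$ then lands in $|\Bd\alpha|$ (since $\delta$ never outputs $k+1$ distinct views), producing a continuous retraction $|\alpha| \to |\Bd\alpha|$, which is impossible.

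Your gap is precisely the construction of $Z$. You want to anchor $Z$ at ``pure-value corners where only $v_j$ appears as a non-faulty input,'' but those are not faces of the fixed input simplex $\sigma$ nor simplices of $\cP(\sigma)$: they are \emph{different} input configurations $\sigma_j = \Psi(\bbP,\{v_j\})$, and the executions witnessing validity at $v_j$ live in $\cP(\sigma_j)$, not in $\cP(\sigma)$. To glue those corners into a $(k-1)$-cycle you must pass through $\cP$ applied to intermediate inputs $\Psi(\bbP,\beta)$ for faces $\beta \subseteq \alpha$, and to do the gluing you need precisely the connectivity of those pieces --- which is the carrier-map hypothesis the paper invokes, and is exactly the inductive construction of $g_x$. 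In other words, once you make your cycle construction rigorous, you recover the paper's argument; as written, the step ``these restrictions glue along shared sub-carriers into a $(k-1)$-cycle that wraps once around $\cO^\star$'' is where all the work hides, and it cannot be carried out inside a single $\cP(\sigma)$. A secondary issue: $\delta_\star(\cP(\sigma))$ need not lie in $\cO^\star$ when $\views(\sigma) \supsetneq V^\star$, so your target restriction is not automatic either; the paper sidesteps this by projecting to $\views$ rather than working in $\cO$.
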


We now present a simple $k$-set agreement algorithm
for Byzantine synchronous systems,
running in $\lceil t/k \rceil + 1$ rounds.
The procedure requires a relatively large number of processes compared to $t$:
we assume $n + 1 \ge k(3t + 1)$.
The procedure was designed 
with the purpose of tightening the connectivity lower bound,
favoring simplicity over the optimality on the number of processes.

Non-faulty processes initially execute a \emph{gossip phase}
for $\lceil t/k \rceil + 1$ rounds,
followed by a \emph{validation phase},
and a \emph{decision phase},
where the output is chosen.
Define $R = \lceil t/k \rceil$,
and consider the following tree,
where nodes are labeled with words over the alphabet $\bbP$.
The root node is labeled as $\lambda$,
which represents an empty string.
Each node $w$ such that $0 \le |w| \le R$
has $n + 1$ child nodes labeled $wp$ for all $p \in \bbP$.
Any non-faulty process $P_i$ maintains such tree,
denoted $T_i$.

All nodes $w$ are associated with the value $\cont_p(w)$,
called the \emph{contents} of $w$.
The special value $\bot$ represents an absent input.
We omit the subscript $p$ when the process is implied or arbitrary.
We divide the processes into $k$ disjoint groups: $\bbP(g) = \set{P_x \in \bbP: x = g \bmod k}$,
for $0 \le g < k$.
For any tree $T$,
we call $T(g)$ the subtree of $T$ having only nodes $wp \in T$ such that $p \in \bbP(g)$.

\begin{algorithm}[htb]
\caption{$P_x.\mathrm{Agree}(I)$}
\label{Alg-Agree}
\begin{algorithmic}[1]
\If{$k = 1$}
	\State \Return $\mathtt{Decision}(\mathrm{Multiset}(\cont(p): p \textrm{ output by consensus algorithm}))$
\EndIf
\State $\cont(w) \leftarrow \bot$ for all $w \in T$
\State $\cont(\lambda) \leftarrow I$ \Comment{Gossip}
\For{$\ell: 1 \textbf{ to } \lceil t/k \rceil + 1$}
	\State $\send{S_x^{\ell-1} = \set{(w, \cont(w)): |w| = \ell - 1}}$
	\Upon{$\recv{S_y^{\ell-1} = \set{(w, v): |w| = \ell - 1, v \in V \cup \set{\bot}}}$ from $P_y$}
		\State $\cont(w P_y) \leftarrow v$ for all $(w, v) \in S_y^{\ell-1}$
	\EndUpon
\EndFor \label{algAgree:gossip}
\State $\bbP' \leftarrow \set{P_i: P_i \textrm{ has a quorum}}$ \Comment{Validation}
\If{$|\bbP'| = (n + 1) - t$}
	\State Apply completion rule for all $wb$ where $b \in \bbP \setminus \bbP'$ and $|wb| = \lceil t/k \rceil$
\EndIf
\State $g \leftarrow \ $ any $g$ such that $T(g)$ is pivotal \label{algAgree:dec1} \Comment{Decision}
\For{$\ell: \lceil t/k \rceil - 1 \textbf{ to } 1$}
	\State Apply consensus rule for all non-validated $wb$ where $b \in \bbP(g)$ and $|wb| = \ell$ \label{algAgree:dec2}
\EndFor
\State \Return $\mathtt{Decision}(\mathrm{Multiset}(\cont(p): p \in T(g)))$
\end{algorithmic}
\end{algorithm}

In the validation phase,
if we have a set $\bbQ$ containing $(n + 1) - t$ processes that acknowledge
all messages transmitted by process $p$ (making sure that $p \in \bbQ$),
at every round $1 \le r \le R$,
we call such set the \emph{quorum} of $p$,
denoted $\quorum{p}$.
Formally,
$\quorum{p} = \bbQ \subseteq \bbP$
such that $p \in \bbQ$,
$|\bbQ| \ge (n + 1) - t$,
and $q \in \bbQ$ whenever $\cont(wp) = v$ implies $\cont(wpq) = v$,
for any $wp$ with $0 \le |wp| \le R$.
It should be clear that every non-faulty process has a quorum containing at least
all other non-faulty processes.
If a process $p$ has a quorum as seen by process $P_i \in \bbG$,
we say that $wp$ has been \emph{validated} on $P_i$,
for any $wp$ with $0 \le |wp| < R$
(and that $p$ has been validated on $P_i$).
Note that in our definition either all entries $wp$ for $p \in \bbP$ are validated,
or none is.
Lemma~\ref{lemma-validated-identical},
proven in Appendix~\ref{App-ProofsAlgorithm},
shows that validated entries are unique across non-faulty processes.
\begin{lemma}
\label{lemma-validated-identical}
If $p$ has been validated on non-faulty processes $P_i$ and $P_j$,
then $\cont_i(wp) = \cont_j(wp)$ for any $0 \le |w| < R$.
\end{lemma}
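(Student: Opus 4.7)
The plan is to use a standard quorum-intersection argument tailored to the EIG-style tree that the gossip phase builds. The key observation is that the quorum of $p$ at $P_i$ and the quorum of $p$ at $P_j$, each of size at least $(n+1)-t$, must overlap in many processes, and among those overlap members at least one must be non-faulty: by inclusion--exclusion $|\quorum{p}_i \cap \quorum{p}_j| \ge 2((n+1)-t) - (n+1) = (n+1) - 2t$, and at most $t$ of these intersection members can be Byzantine, leaving at least $(n+1) - 3t$ non-faulty witnesses. This is trivially positive under the hypothesis $n+1 \ge kt(d+2) + k$ in place for the algorithm.

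Next I would fix such a non-faulty witness $q \in \quorum{p}_i \cap \quorum{p}_j$ and use it as a ``relay'' whose reports to $P_i$ and $P_j$ must coincide, because $q$ is non-faulty and the channels are reliable and authenticated. Concretely, for any word $w$ with $0 \le |w| < R$ the EIG node $wpq$ stores the value that $q$ reported (in round $|w|+2$) to have heard from $p$ as $p$'s claim about the chain $w$; since $q$ is non-faulty it transmits the same value to every non-faulty recipient in that round, so $\cont_i(wpq) = \cont_j(wpq)$.

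Then I would invoke the validation conditions on each side: the assumption $q \in \quorum{p}_i$ gives $\cont_i(wpq) = \cont_i(wp)$ for every such $w$, and symmetrically $q \in \quorum{p}_j$ gives $\cont_j(wpq) = \cont_j(wp)$. Chaining these three equalities yields $\cont_i(wp) = \cont_i(wpq) = \cont_j(wpq) = \cont_j(wp)$, which is exactly the claim. The argument holds uniformly for every $w$ with $0 \le |w| < R$ because the definition of $\quorum{p}$ quantifies $w$ universally inside the single condition each member $q$ must satisfy.

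I do not expect a serious obstacle here; the argument is the classical ``echo through a common non-faulty witness'' pattern. The only care required is (i) parsing the quorum definition so that a single non-faulty member truly forces equality on every length-$|w|+1$ prefix involving $p$, and (ii) confirming that the assumed lower bound on $n$ is strong enough to guarantee a non-faulty element in $\quorum{p}_i \cap \quorum{p}_j$, which it is since $kt(d+2)+k \gg 3t$ for $k,d \ge 1$.
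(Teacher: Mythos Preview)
Your proposal is correct and is essentially the same argument as the paper's: intersect the two quorums $\bbQ_i$ and $\bbQ_j$, observe that $|\bbQ_i \cap \bbQ_j| \ge (n+1)-2t$ exceeds $t$, pick a non-faulty witness $q$ in the intersection, and chain $\cont_i(wp)=\cont_i(wpq)=\cont_j(wpq)=\cont_j(wp)$. The only cosmetic difference is that the paper invokes the standing bound $n+1>3t$ directly rather than the stronger algorithm-specific bound $n+1 \ge kt(d+2)+k$; both guarantee the needed non-faulty witness.
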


In the decision phase,
if we see $t$ processes without a quorum,
we have technically identified all non-faulty processes $\bbB$.
In this case,
we fill $R$-th round values of any $b \in \bbB$ using the \emph{completion} rule:
we make $\cont(wb) = v$ if we have $(n + 1) - 2t$ processes $\bbG' \subseteq \bbG$
where $\cont(wbg) = v$
for any $g \in \bbG'$ and $|wb| = R$.
If a process $b$ has its $R$-round values
completed as above in process $P_i \in \bbG$,
we say that $wb$ has been \emph{completed} on $P_i$ for any $|wb| = R$.
Lemma~\ref{lemma-validated-completed},
proven in Appendix~\ref{App-ProofsAlgorithm},
shows that completed entries are identical and consistent
with validated entries across non-faulty processes.
(Intuitively,
the completion rule was done over identical values from correct processes.)
\begin{lemma}
\label{lemma-validated-completed}
If $wp$ has been completed or validated on a non-faulty process $P_i$,
and $wp$ has been completed on a non-faulty process $P_j$,
then $\cont_i(wp) = \cont_j(wp)$.
\end{lemma}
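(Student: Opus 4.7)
The plan is to exploit two ingredients: non-faulty processes relay values truthfully, so for any non-faulty $q$ and any non-faulty observer $P_i$ one has $\cont_i(wpq) = \cont_q(wp)$; and the completion rule on $P_j$ fires only when the $t$ processes flagged as lacking quorum coincide with the actual Byzantine set $\bbB$, forcing $|\bbG| = n + 1 - t$ and $p \in \bbB$.

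I would first establish the second point. From the gossip rule, for non-faulty $q$ and $r$ we have $\cont_i(wqr) = \cont_r(wq) = \cont_q(w) = \cont_i(wq)$, so every non-faulty process admits a quorum on every non-faulty observer (namely all the non-faulty processes themselves) and is never flagged. Thus the $t$ processes without quorum that trigger completion on $P_j$ are genuine Byzantines, and since at most $t$ exist, we have $|\bbB| = t$ and $|\bbG| = n + 1 - t$; in particular $p \in \bbB$. Writing $v_j = \cont_j(wp)$, the completion rule then hands us a set $\bbG_j' \subseteq \bbG$ with $|\bbG_j'| \ge n + 1 - 2t$ and $\cont_j(wpq) = v_j$ for every $q \in \bbG_j'$, so by truthful relay $\cont_q(wp) = v_j$ for all $q \in \bbG_j'$.

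Next, for $v_i = \cont_i(wp)$, I would produce on the $P_i$ side a set $\bbH_i \subseteq \bbG$ of size $\ge n + 1 - 2t$ with $\cont_q(wp) = v_i$ for every $q \in \bbH_i$. In the completed case, take $\bbH_i$ directly from the completion rule applied on $P_i$. In the validated case, take $\bbH_i = \bbQ_i \cap \bbG$ for the validating quorum $\bbQ_i$: $|\bbQ_i| \ge n + 1 - t$ and at most $t$ of these are Byzantine, leaving $|\bbH_i| \ge n + 1 - 2t$, and for $q \in \bbH_i$ we have $\cont_q(wp) = \cont_i(wpq) = v_i$ by truthful relay. Finally, I apply a pigeonhole step inside $\bbG$ of size $n + 1 - t$: $|\bbH_i \cap \bbG_j'| \ge 2(n + 1 - 2t) - (n + 1 - t) = n + 1 - 3t$, which is positive because $n + 1 \ge kt(d + 2) + k \ge 3t + 1$ for $k, d, t \ge 1$. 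Any $q^*$ in the intersection satisfies $v_i = \cont_{q^*}(wp) = v_j$.

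The main obstacle is the first point: carefully arguing that completion on $P_j$ forces $|\bbG| = n + 1 - t$ exactly, so that both witness sets are drawn from a pool of that size. Without this identification the intersection bound degrades to $n + 1 - 4t$, which the hypothesis on $n$ does not guarantee in general.
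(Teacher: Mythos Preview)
Your proposal is correct and follows essentially the same approach as the paper: in both cases one extracts, on each side, a set of at least $(n+1)-2t$ non-faulty relays witnessing the value, and then uses that these two sets must overlap inside $\bbG$. The paper's write-up is terser---it leaves implicit both the identification $\bbP\setminus\bbP'=\bbB$ (hence $|\bbG|=n+1-t$) and the pigeonhole step guaranteeing a common witness---while you spell these out explicitly; your remark about the ``main obstacle'' is exactly the point the paper glosses over.
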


We have two possible cases:
(i) there is a subtree $T(g)$ with less than $\lceil t/k \rceil$
non-validated processes --
call such subtree \emph{pivotal};
or (ii) no such tree exists,
in which case we apply the completion rule to $R$-round values in $T(0)$,
and define $T(0)$ as our pivotal subtree instead.
A pivotal subtree, therefore, must exist according to the definition above.

Denote the set of processes in the word $w$ as $\setproc{w}$.
For any non-validated $wb$ with $b \in \bbP(g)$ in a pivotal subtree $T(g)$,
where $1 \le |wb| < R$,
we establish consensus on $\cont(wb)$.
We apply the \emph{consensus} rule:
$\cont(wb) = v$ if the majority of processes in $\bbP(g) \setminus \setproc{wb}$
is such that $wbp = v$.
This rule is applied first to entries labeled $wb$ where $|wb| = R - 1$,
and then moving upwards (please refer to Alg.~\ref{Alg-Agree}).
Our algorithm is essentially
separating the possible chains of unknown values across disjoint process groups,
which either forces one of these chains to be smaller than $R = \lceil t/k \rceil$,
or reveals all faulty processes,
giving us the ability to perform the completion rule.
This fundamental tradeoff underlies our algorithm,
and ultimately explains \emph{why} the $\lceil t/k \rceil$ connectivity bound is tight.
Lemma~\ref{lemma-consensus-pivotal},
proven in Appendix~\ref{App-ProofsAlgorithm},
shows that the consensus rule indeed establishes consensus
across non-faulty processes that identify $T(g)$ as the pivotal subtree.

\begin{lemma}
\label{lemma-consensus-pivotal}
For any two-non-faulty processes $P_i$ and $P_j$ that applied the consensus rule on a pivotal subtree $T(g)$,
with $0 \le g < k$,
we have that $\cont_i(p) = \cont_j(p)$ for any $p \in \bbP(g)$.
\end{lemma}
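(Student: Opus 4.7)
The plan is to prove the claim by induction on the depth $\ell$ of nodes $w$ in the pivotal subtree $T(g)$, moving from the deepest level $\ell = R$ upward to $\ell = 1$. The inductive claim will be: for every node $w$ at depth $\ell$ in $T(g)$ whose contents are determined by validation, completion, or the consensus rule on \emph{both} $P_i$ and $P_j$, we have $\cont_i(w) = \cont_j(w)$. Taking $\ell = 1$ then yields the statement of the lemma.

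For the base case at depth $\ell = R$, the contents of $w$ come from one of two sources. If the trailing process of $w$ is validated on both $P_i$ and $P_j$, Lemma~\ref{lemma-validated-identical} gives $\cont_i(w) = \cont_j(w)$ directly; in case~(ii), where all Byzantine processes have been identified and their $R$-level values filled by the completion rule, Lemma~\ref{lemma-validated-completed} takes care of any remaining $w$. In case~(i), some depth-$R$ entries coming from non-validated processes may genuinely disagree between $P_i$ and $P_j$, but crucially there are at most $R - 1 = \lceil t/k\rceil - 1$ such processes in $\bbP(g)$ by the definition of a pivotal subtree, so the disagreement is tightly bounded and will be absorbed in the majority argument at the next step.

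For the inductive step at depth $\ell < R$, I would split on how $\cont(wb)$ is set. If $b$ is validated on both observers, Lemma~\ref{lemma-validated-identical} again yields agreement. Otherwise, $b$ is non-validated and the consensus rule defines $\cont(wb)$ as the majority of $\cont(wbp)$ over $p \in \bbP(g) \setminus \setproc{wb}$. For correct $p$, Lemma~\ref{lemma-validated-identical} ensures $\cont_i(wbp) = \cont_j(wbp)$; for non-validated $p$ at depth $\ell + 1 < R$, the inductive hypothesis applies; at the boundary $\ell + 1 = R$, agreement holds by completion in case~(ii), whereas in case~(i) there are fewer than $R$ non-validated processes in $\bbP(g)$, so fewer than $R$ summands of the majority may disagree. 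Combining, the number of summands that agree between $P_i$ and $P_j$ is at least $|\bbP(g)| - |\setproc{wb}| - R$, which by the standing assumption $n + 1 \ge kt(d+2) + k$ (hence $|\bbP(g)| \ge t(d+2) + 1$) and $|\setproc{wb}| < R \le t$ vastly exceeds $2R$. A standard counting argument then shows that the plurality winner among the agreeing summands remains the plurality winner even after flipping the at-most-$R$ disagreeing summands arbitrarily, so the consensus output coincides on $P_i$ and $P_j$.

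The main obstacle I anticipate is the boundary accounting at depth $R$ in case~(i): the entries there are not made equal by any lemma, and the whole argument rests on showing that their number is small enough that majority voting is insensitive to them. This is exactly where the pivotal property (fewer than $\lceil t/k\rceil$ non-validated processes in $T(g)$) and the numerical assumption on $n$ combine. A secondary subtlety is that validation status is observer-dependent for Byzantine processes, so care is needed to apply Lemma~\ref{lemma-validated-identical} only when $b$ is validated on both $P_i$ and $P_j$; when $b$ is validated on only one of them, the argument must treat the other observer as if it applied the consensus rule and verify that both routes lead to the same value via the same majority-domination bound.
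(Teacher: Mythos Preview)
Your inductive scaffolding is sound and matches what the paper does implicitly, but you are missing the key idea for the mixed case. When $b$ is validated at $P_i$ (with value $v$) but not at $P_j$, your suggestion to ``treat the other observer as if it applied the consensus rule'' shows only that if $P_i$ \emph{hypothetically} ran consensus it would agree with $P_j$'s consensus; it does not show that this common consensus value equals the validated value $v$ that $P_i$ actually holds. For that you need the \emph{quorum argument}, which is the heart of the paper's proof: validation of $p$ at $P_i$ means by definition that at least $(n+1)-t$ processes $q$ satisfy $\cont_i(wpq) = v$, so at most $t$ children of $wp$ differ from $v$ at $P_i$; since at most $t$ faulty processes can report differently to $P_j$, at most $2t$ children differ from $v$ at $P_j$; with $|\bbP(g)| > 4t$ (from $d \ge k \ge 2$) the majority at $P_j$ is $v$, whence $\cont_j(wp) = v = \cont_i(wp)$. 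The paper organizes its whole proof around this case and this bound, treating the ``both non-validated'' case (your induction) as the easy half, handled by noting that both processes then run the consensus rule over identical child multisets.

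A secondary point: your ``fewer than $R$'' bound on disagreeing summands at level $R$ treats validation as observer-independent, but a faulty $q$ can be validated at $P_i$ yet not at $P_j$ (or vice versa). The safe bound on children that may disagree between $P_i$ and $P_j$ is $\le t$ (the faulty $q$'s), not $< R$. This does not break your majority argument since $|\bbP(g)| > 4t$, but the counting should be stated in terms of $t$, as the paper does.
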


\begin{theorem}
\label{theorem-ksetcorrectness}
Algorithm~\ref{Alg-Agree} solves $k$-set agreement
in $\lceil t/k \rceil + 1$ rounds.
\end{theorem}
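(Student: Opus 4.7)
The plan is to verify the three required properties of $k$-set agreement as specified in Definition~\ref{definition-task}: termination, strong validity, and agreement. The first two will follow essentially directly from the algorithm's structure and the preceding lemmas; the real work is in the agreement property.

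\textbf{Termination} is immediate: the gossip phase consists of exactly $\lceil t/k \rceil + 1$ rounds of message exchange (the loop at line~\ref{algAgree:gossip}), and the validation and decision phases involve only local computation on the bounded trees $T_p$. \textbf{Strong validity} is essentially the content of Lemma~\ref{lemma-decision-pivotal}(ii): any value returned by $\mathtt{Decision}$ applied to the multiset of top-level contents in the pivotal subtree is the input of some non-faulty process. Thus I would simply invoke that lemma once I have established that every non-faulty process identifies at least one pivotal subtree, which is covered by case~(ii) of the algorithm (the completion fallback) whenever no subtree has fewer than $\lceil t/k \rceil$ non-validated processes.

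\textbf{Agreement} is the main obstacle, and the plan is to reduce it to a counting argument over the $k$ disjoint groups $\bbP(0), \ldots, \bbP(k-1)$. First I would note that each non-faulty process selects, in line~\ref{algAgree:dec1}, some index $g \in \{0, \ldots, k-1\}$ whose subtree $T(g)$ it treats as pivotal, so the set of pivotal indices chosen across all non-faulty processes has size at most $k$. Next I would argue that any two non-faulty processes $P_i$ and $P_j$ that pick the \emph{same} pivotal index $g$ produce the same decision value: by Lemma~\ref{lemma-validated-identical} and Lemma~\ref{lemma-validated-completed} they already agree on all validated and completed entries of $T(g)$, and by Lemma~\ref{lemma-consensus-pivotal} they also agree on every entry filled in by the consensus rule at line~\ref{algAgree:dec2}. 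Hence after the consensus phase their multisets of top-level contents coincide, and the deterministic $\mathtt{Decision}$ function yields identical outputs. Consequently, the number of distinct decision values is bounded by the number of pivotal indices, which is at most $k$.

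The subtle point that I expect to require the most care is justifying that case~(ii) of the algorithm (where no group is already pivotal under validation alone) is well-defined and consistent across processes. The pigeonhole argument here is that if every $T(g)$ has at least $R = \lceil t/k \rceil$ non-validated processes, then since non-faulty processes are always validated (their quorums contain all other non-faulty processes), there are at least $kR \ge t$ Byzantine processes, forcing equality and in particular allowing the completion rule to be applied soundly via Lemma~\ref{lemma-validated-completed}. The $k=1$ branch at the top of the algorithm is handled separately as standard Byzantine consensus. With the group-counting observation and this case analysis in hand, agreement follows, completing the proof.
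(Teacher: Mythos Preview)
Your proposal is correct and follows essentially the same argument as the paper: termination from the fixed round count, validity directly from Lemma~\ref{lemma-decision-pivotal}, and agreement by observing that at most $k$ pivotal indices are chosen and that processes choosing the same index decide identically via Lemma~\ref{lemma-consensus-pivotal}. The paper's own proof is terser (it omits your pigeonhole justification that a pivotal subtree always exists and the $k=1$ remark, relying on the prose preceding the algorithm), but the structure and the lemmas invoked are the same.
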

\begin{proof}
Termination is trivial,
as we execute exactly $R = \lceil t/k \rceil + 1$ rounds. 
By Lemma~\ref{lemma-consensus-pivotal},
each pivotal subtree yields a unique decision value.
As we have at most $k$ pivotal subtrees identified across non-faulty processes,
up to $k$ values are possibly decided across non-faulty processes.
\end{proof}

\section{Conclusion}

In Byzantine synchronous systems,
the protocol complex can remain $(k - 1)$-connected for $\lceil t/k \rceil$ rounds,
potentially \emph{one} more round than in crash-failure systems.
We conceive a combinatorial operator modeling the ability of Byzantine processes
to equivocate without revealing their Byzantine nature,
just after $\lfloor t/k \rfloor$ rounds of crash failures.
We compose this operator with the regular crash-failure operators,
extending $(k - 1)$-connectivity up to $\lceil t/k \rceil$ rounds.
We tighten this bound,
at least when $n$ is relatively large compared to $t$,
via a full-information protocol
that solves a formulation of $k$-set agreement.

It may be surprising that Byzantine failures
impose only \emph{one} additional synchronous round over the crash-failure model,
and \emph{at most that} in our standard setting,
where inputs are arbitrarily attributed to processes,
and the number of processes is strictly bigger than $k(3t + 1)$.
In terms of solvability vs. number of rounds,
the penalty for moving from crash to Byzantine failures
can thus be \emph{quite limited}.
Previous work has hinted this possibility operationally, since
(i) in synchronous systems where $n$ is large enough compared to $t$,
we can simulate crash failures on Byzantine systems with a 1-round delay~\cite{BazziNeiger01};
and (ii) techniques similar to the reliable broadcast of~\cite{Bracha,SriTouRB}
deal with the problem of Byzantine equivocation,
also with a 1-round delay.
This extra round is crucial -- but enough -- to limit the impact of Byzantine behavior
in rather usual operational settings.


\newpage

\appendix

\section{Appendix: Proofs for the Connectivity Arguments}
\label{App-ProofsConnectivity}

\noindent \textbf{Proof of Lemma~\ref{lemma-connectivitykset}}
\begin{proof}
Consider a $k$-simplex $\alpha = \set{u_0, \ldots, u_k} \subseteq \set{v_0, \ldots, v_d}$ with $k + 1$ different inputs.
Let $\cI_\beta = \Psi(\bbP, \beta)$ for any $\beta \subseteq \alpha$,
and $\cI_x = \bigcup_{\beta \in \skel^x(\alpha)} \Psi(\bbP, \beta)$.
We construct a sequence of continuous maps $g_x: |\skel^x(\alpha)| \to |\cK_x|$
where $\cK^x$ is homeomorphic to $\skel^x(\alpha)$ in $|\skel^x(\cP(\cI_x))|$.

\textbf{Base.} Let $g_0$ map any vertex $v \in \alpha$ to a vertex in $\cK_v = \cP(\cI_{\set{v}})$.
We know that $\cK_v$ is $k$-connected since $\dim(\cI_{\set{v}}) = \dim(\cI)$ and $\cP$ is a $k$-connected carrier map.
We just constructed
$$ g_0: |\skel^0(\alpha)| \to |\cK_0| \mathrm{,} $$
where $\cK^0$ is isomorphic to a $\skel^0(\alpha)$ in $|\skel^0(\cP(\cI_0))|$.

\textbf{Induction Hypothesis.} Assume $g_{x-1}: |\skel^{x-1}(\alpha)| \to |\cK_{x-1}|$ for any $x \le k$,
where $\cK_{x-1}$ is isomorphic to $\skel^{x-1}(\alpha)$ in $|\skel^{x-1}(\cP(\cI_{x-1}))|$. 
For any $\beta \in \skel^x(\alpha)$,
we have that $\skel^x(\cP(\cI_\beta))$ is $(x - 1)$-connected,
hence the continuous image of the $(x - 1)$-sphere in $\cP(\cI_\beta)$
can be extended to the continuous image of the $x$-disk in $\skel^x(\cP(\cI_\beta))$.
We just constructed
$$ g_x: |\skel^x(\alpha)| \to |\cK_x| \mathrm{,} $$
where $\cK^x$ is isomorphic to $\skel^x(\alpha)$ in $|\skel^x(\cP(\cI_0))|$.
In the end,
we have $g_k: |\alpha| \to |\cK_k|$ where $\cK_k$ is isomorphic to $\alpha$ in $\skel^k(\cP(\cI_k))$.

Now suppose,
for the sake of contradiction,
that $k$-set agreement is solvable,
so there must be a simplicial map $\delta: \cP(\cI) \to \cO$ carried by $\Delta$.
Then,
induce the continuous map $\delta_c: |\cK_k| \to |\alpha|$ from $\delta$ such that
$\delta_c(v) \in |\views(\delta(\mu))|$ if $v \in |\mu|$,
for any $\mu \in \cK_k$.
Also,
note that the composition of $g_k$ with the continuous map $\delta_c$
induces another continuous map $|\alpha| \to |\Bd\alpha|$,
since by assumption $\delta$ never maps a $k$-simplex of $\cK_k$ to a simplex with $k + 1$ different views
(so $\delta_c$ never maps a point to $|\inte{\alpha}|$).
We built a \emph{continuous retraction} of $\alpha$ to its own border $\Bd\alpha$,
a contradiction (please refer to~\cite{Munkres84,Kozlov07}).
Since our assumption was that there existed
a simplicial map $\delta: \cP(\cI) \to \cO$ carried by $\Delta$,
we conclude that $k$-set agreement is not solvable.
\end{proof}

\section{Appendix: Proofs for the $k$-Set Agreement Procedure}
\label{App-ProofsAlgorithm}

\noindent \textbf{Proof of Lemma~\ref{lemma-validated-identical}.}
\begin{proof}
If $p$ has been validated on $P_i \in \bbG$,
then $\cont_i(wp) = v$ implies $\cont_i(wpq) = v$ for $(n + 1) - t$ different processes $q \in \bbQ_i$,
and $\cont_j(wp) = v$ implies $\cont_j(wpq) = v$ for $(n + 1) - t$ different processes $q \in \bbQ_j$,
for any $0 \le |w| < R$.
As we have at most $t$ non-faulty processes and $n + 1 > 3t$,
$|\bbQ_i \cap \bbQ_j| \ge (n + 1) - 2t > t + 1$,
containing at least one non-faulty process that,
by definition,
broadcasts values consistently in its run.
Hence, $\cont_i(wp)$ and $\cont_j(wp)$ must be identical.
\end{proof}

\noindent \textbf{Proof of Lemma~\ref{lemma-validated-completed}}
\begin{proof}
If $wp$ has been validated on $P_i$,
$\cont_i(wp) = v$ implies $\cont_i(wpq) = v$ for $(n + 1) - t$ different processes $q \in \bbQ$.
When $P_j$ applies the completion rule on $wp$,
then $\cont_j(wpq) = v$ for $(n + 1) - 2t$ different processes $q \in \bbG$,
as we have at most $t$ faulty processes.
Therefore,
$\cont_i(wp) = \cont_j(wp)$.

If $wp$ has been completed on all non-faulty processes,
they all have identified $t$ faulty processes,
and the completion rule is performed over identical entries associated with non-faulty processes.
Therefore,
$\cont_i(wp) = \cont_j(wp)$ as well.
\end{proof}

\noindent \textbf{Proof of Lemma~\ref{lemma-consensus-pivotal}}
\begin{proof}
Consider a non-faulty process $P_i$ establishing the value of $\cont_i(wp)$
with the consensus rule.
Define $\setcons{wp} = \bbP(g) \setminus \setproc{wp}$ for any $wp \in T(g)$ with $|wp| < R$,
noting that $|\setcons{wp}| \ge 2t + 2$ as $|\bbP(g)| \ge 3t + 1$ and $|wp| < t$.

Consider two cases:
(i) if $wp$ has been validated at a non-faulty process $P_j$ with $\cont_j(wp) = v$,
at most $t$ values from $S_i = \mathrm{Multiset}(\cont_i(wpq): q \in \setcons{wp})$
will be different than $v$.
Hence, there will always be a majority of values in $S_i$ that will contain $v$,
because $|S_i| \ge 2t + 2$.
(ii) otherwise, if $wp$ has not been validated at any non-faulty process,
all $\cont(wp)$ values are being calculated over consistent values,
by Lemma~\ref{lemma-validated-completed},
which makes all non-faulty processes establish $\cont(wp)$ consistently with the consensus rule.
\end{proof}

\bibliography{Bibliography.bib}
\bibliographystyle{abbrv}

\end{document}